\title{Dynamic Trace Estimation}
\newtheorem{problem}{Problem}
\newtheorem{theorem}{Theorem}[section]
\newtheorem{lemma}[theorem]{Lemma}
\newtheorem{claim}[theorem]{Claim}
\newtheorem{fact}[theorem]{Fact}
\newtheorem{definition}{Definition}
\newtheorem*{rep@theorem}{\rep@title}
\newcommand{\newreptheorem}[2]{%
	\newenvironment{rep#1}[1]{%
		\def\rep@title{#2 \ref{##1}}%
		\begin{rep@theorem}}%
		{\end{rep@theorem}}}
\newcommand{\tr}{\mathrm{tr}}
\newcommand{\Var}{\mathrm{Var}}
\newcommand{\R}{\mathbb{R}}
\newcommand{\E}{\mathbb{E}}
\newcommand{\DS}{\text{DeltaShift}}
\newcommand{\DSPP}{\text{DeltaShift}\texttt{++}}
\newcommand{\NR}{\text{NoRestart}}
\DeclareMathOperator*{\argmin}{arg\,min}
\title{Dynamic Trace Estimation}
\author{%
  Prathamesh Dharangutte \\
  Dept.of Computer Science \& Engineering\\
  New York University\\
  \texttt{ptd244@nyu.edu} \\
  \And
  Christopher Musco \\
  Dept.of Computer Science \& Engineering\\
  New York University\\
  \texttt{cmusco@nyu.edu} 
}
\begin{document}

\maketitle

\begin{abstract}
We study a \emph{dynamic} version of the implicit trace estimation problem. Given access to an oracle for computing matrix-vector multiplications with a dynamically changing matrix $A$, our goal is to maintain an accurate approximation to $A$'s trace using as few multiplications as possible. We present a practical algorithm for solving this problem and prove that, in a natural setting, its complexity is quadratically better than the standard solution of repeatedly applying Hutchinson's stochastic trace estimator. We also provide an improved algorithm assuming slightly stronger assumptions on the dynamic matrix $A$. We support our theory with empirical results, showing significant computational improvements on three applications in machine learning and network science: tracking moments of the Hessian spectral density during neural network optimization, counting triangles, and estimating natural connectivity in a dynamically changing graph.
\end{abstract}

\section{Introduction}
\label{sec:intro}

Implicit or ``matrix-free'' trace estimation is a ubiquitous computational primitive in linear algebra, which has become increasingly important in machine learning and data science. Given access to an oracle for computing matrix-vector products $Ax_1, \ldots, Ax_m$ between an $n\times n$ matrix $A$ and chosen vectors $x_1, \ldots, x_m$, the goal is to compute an approximation to $A$'s trace, $\tr(A) = \sum_{i=1}^n A_{ii}$.
This problem arises when $A$'s diagonal entries cannot be accessed explicitly, usually because forming $A$ is computationally prohibitive. As an example, consider $A$ which is the Hessian matrix of a loss function involving a neural network. While forming the Hessian is infeasible when the network is large, backpropagation can be used to efficiently compute Hessian-vector products \cite{Pearlmutter:1994}.

In other applications,  $A$ is a \emph{matrix function} of another matrix $B$. For example, if $B$ is a graph adjacency matrix, $\tr(B^3)$ equals six times the number of triangle in the graph \cite{Avron:2010}. Computing $A = B^3$ explicitly to evaluate the trace would require $O(n^3)$ time, while the matrix-vector multiplication $Ax = B\cdot(B\cdot(Bx))$ only requires $O(n^2)$ time. Similarly, in log-determinant approximation, useful in e.g. Bayesian log likelihood computation or determinantal point process (DPP) methods, we want to approximate the trace of $A = \log(B)$  \cite{BoutsidisDrineasKambadur:2015,HanMalioutovShin:2015,SaibabaAlexanderianIpsen:2017}. Again, $A$ takes $O(n^3)$ time to form explicitly, but $Ax = \log(B)x$ can be computed in roughly $O(n^2)$ time using iterative methods like the Lanczos algorithm \cite{Higham:2008}. Dynamic versions of the log-determinant estimation problem have been studied due to applications in greedy methods for DPP inference
 \cite{han2017faster}. 

 
In data science and machine learning, other applications of implicit trace estimation include matrix norm and spectral sum estimation \cite{HanMalioutovAvron:2017,UbaruSaad:2018,MuscoNetrapalliSidford:2018}, as well as methods for eigenvalue counting \cite{Di-NapoliPolizziSaad:2016} and spectral density estimation \cite{weisse2006kernel,LinSaadYang:2016}. Spectral density estimation methods typically use implicit trace estimation to estimate \emph{moments} of a matrix's eigenvalue distribution -- i.e., $\tr(A), \tr(A^2), \tr(A^3),$ etc. -- which can then be used to compute an approximation to that entire distribution.
In deep learning, spectral density estimation is used to quickly analyze the spectra of weight matrices  \cite{PenningtonSchoenholzGanguli:2018,MahoneyMartin:2019} or to probe information about the Hessian matrix during optimization \cite{GhorbaniKrishnanXiao:2019,yao2020adahessian}. 
Trace estimation has also been used for neural networks weight quantization \cite{dong2019hawq, qian2020channel} and to understand training dynamics \cite{wei2019noise}.

\subsection{Static Trace Estimation}
The mostly widely used implicit trace estimation algorithm is Hutchinson's estimator \cite{Girard:1987,Hutchinson:1990}. Letting $g_1, \ldots, g_\ell \in \R^n$ be random vectors with i.i.d. mean 0 and variance 1 entries (e.g., standard Gaussian or $\pm 1$ Rademachers), Hutchinson's approximates $\tr(A)$ via the average $h_\ell(A) = \frac{1}{\ell}\sum_{i=1}^1 g_i^T\left(Ag_i\right)$.
This estimator requires $\ell$ matrix-vector multiplications to compute. Its variance can be shown to be $O(\|A\|_F^2/\ell)$ and with high probability, when $\ell = O(1/\epsilon^2)$, we have the error guarantee \cite{avron2011randomized,MeyerMuscoMusco:2021}:
\begin{align}
	\label{eq:hutch_guarantee}
		|h_\ell(A) - \tr(A)|< \epsilon \|A\|_F. 
\end{align}
While improvements on Hutchinson's estimator have been studied for restricted classes of matrices (positive semidefinite, sparse, nearly low-rank, etc.) \cite{MeyerMuscoMusco:2021,TangSaad:2011,StathopoulosLaeuchliOrginos:2013,SaibabaAlexanderianIpsen:2017,Hanyu-Li:2020}, the method is the best known for general matrices -- no techniques achieve guarantee \eqref{eq:hutch_guarantee} with $o(1/\epsilon^2)$ matrix-vector products.

\subsection{Dynamic  Trace Estimation}
\label{sec:intro_dynamic}
We explore a natural and widely applicable {dynamic} version of the implicit trace estimation problem: given access to a matrix-vector multiplication oracle for a \emph{dynamically changing} matrix $A$, maintain an approximation to $A$'s trace. This problem arises in applications involving optimization in machine learning where we need to estimate the trace of a constantly changing Hessian matrix $H$ (or some function of $H$) during model training. In other applications, $A$ is dynamic because it is repeatedly modified by some algorithmic process. E.g., in the transit planning method of \cite{WangSunMusco:2020}, edges are added to a network to optimally increase the ``Estrada index'' \cite{EstradaHatano:2008}. Evaluating this connectivity measure requires computing $\tr(\exp(B))$, where $B$ is the dynamically changing network adjacency matrix and $\exp(B)$ is a matrix exponential. 
A naive solution to the dynamic problem is to simply apply Hutchinson's estimator to every snapshot of $A$ as it changes over time. To achieve a guarantee like \eqref{eq:hutch_guarantee} for $m$ time steps, we require $O(m/\epsilon^2)$ matrix-vector multiplies. The goal of this paper is to improve on this bound when the changes to $A$ are \emph{bounded}. Formally, we abstract the problem as follows:
\begin{mdframed}[]
	\begin{problem}[Dynamic trace estimation]
		\label{prob:dyn-trace}
		Let $A_1,..., A_m$ be $n \times n$ matrices satisfying:
		\begin{align*}
		1.\hspace{1em}& \|A_i\|_F \leq 1\text{, for all } i \in [1,m]. &
		2.\hspace{1em}& \|A_{i+1} - A_i\|_F \leq \alpha\text{, for all  }i \in [1,m-1].
		\end{align*}
		Given implicit matrix-vector multiplication access to each $A_i$ in sequence, the goal is to compute trace approximations $t_1, \ldots, t_m$ for $\tr(A_1),...., \tr(A_m)$ such that, for each $i \in 1,\ldots, m$, 
		\begin{align}
		\label{eq:p1_guar}
			\mathbb{P}[| t_i - \tr(A_i)| \geq \epsilon] \leq \delta.
		\end{align}
	\end{problem}
\end{mdframed}
Above $A_1, \ldots, A_m$ represent different snapshots of a dynamic matrix at $m$ time steps. We require $\|A_i\|_F \leq 1$ only to simplify the form of our error bounds – no explicit rescaling is necessary for matrices  with  larger  norm. If we assume $\|A_i\|_F \leq U$ for some (unknown) upper bound $U$, the guarantee of \eqref{eq:p1_guar} would simply change to involve a $\epsilon U$ terms instead of $\epsilon$. The second condition bounds how much the matrices change over time. Again for simplicity, we assume a fixed upper bound $\alpha$ on the difference at each time step, but the algorithms presented in this paper will be adaptive to changing gaps between $A_i$ and $A_{i+1}$, and will perform better when these gaps are small on average. By triangle inequality, $\alpha \leq 2$, but in applications we typically have $\alpha \ll 1$, meaning that the changes in the dynamic matrix are small relative to its Frobenius norm. If this is not the case, there is no hope to improve on the naive method of applying Hutchinson's estimator repeatedly to each $A_i$. 

\textbf{Note on Matrix Functions.} In many applications $A_i = f(B_i)$ for a dynamically changing matrix $B$. While we may have $\|A_{i+1} - A_{i}\|_F = \|f(B_{i+1}) - f(B_{i})\|_F \gg \|B_{i+1} - B_{i}\|_F$ for functions like the matrix exponential, this is not an immediate issue. To improve on Hutchinson's estimator, the important requirement is simply that $\|A_{i+1} - A_{i}\|_F$ is small \emph{in comparison} to $\|A_{i+1}\|_F$. As discussed in Section \ref{sec:experiments}, this is typically the case for application involving matrix functions.

We will measure the complexity of any algorithm for solving Problem \ref{prob:dyn-trace} in the \textbf{matrix-vector multiplication oracle model of computation}, meaning that we consider the cost of matrix-vector products (which are the only way $A_1, \ldots, A_m$ can be accessed) to be significantly larger than other computational costs. We thus seek solely to minimize the number of such products used  \cite{SunWoodruffYang:2019}. The matrix-vector oracle model has seen growing interest in recent years as it generalizes both the matrix sketching and Krylov subspace models in linear algebra, naturally captures the true computational cost of algorithms in these classes, and is amenable to proving strong lower-bounds
 \cite{SimchowitzEl-AlaouiRecht:2018,BravermanHazanSimchowitz:2020}. 

\subsection{Main Result}
Our main result is an algorithm for solving Problem \ref{prob:dyn-trace} more efficiently than Hutchinson's estimator:
\begin{theorem}\label{thm:summary}
	For any $\epsilon, \delta, \alpha \in (0,1)$, the $\DS$ algorithm (Algorithm \ref{alg:deltashift}) solves Problem \ref{prob:dyn-trace} with 
	\begin{align*}
		O\left(m\cdot\frac{\alpha\log(1/\delta)}{\epsilon^2} + \frac{\log(1/\delta)}{\epsilon^2}\right)
	\end{align*}
total matrix-vector multiplications involving $A_1, \ldots, A_m$. 
\end{theorem}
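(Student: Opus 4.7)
The plan is to exploit the linearity of trace, $\tr(A_j) - \tr(A_{j-1}) = \tr(A_j - A_{j-1})$, combined with the fact that Hutchinson's estimator has variance $O(\|M\|_F^2/\ell)$. Applied to the small-norm difference $A_j - A_{j-1}$ (Frobenius norm at most $\alpha$), it achieves accuracy $\epsilon$ with only $O(\alpha^2/\epsilon^2)$ samples versus $O(1/\epsilon^2)$ for $A_j$ itself. This suggests maintaining a running trace estimate initialized by one ``expensive'' Hutchinson call and then updated cheaply via difference estimates. Naively tracking differences across all $m$ steps causes the accumulated variance to grow linearly in $m$, ultimately requiring $\ell = \Omega(m\alpha^2/\epsilon^2)$ per step. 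The fix is to periodically \textbf{reset} the running estimate before the accumulated variance becomes too large.

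Concretely, I would structure $\DS$ as follows. Partition the $m$ steps into blocks of length $k = \Theta(1/\alpha)$. At the start of each block (time $s$), compute a reset estimate $t_s = h_{\ell_R}(A_s)$ with $\ell_R = \Theta(\log(1/\delta)/\epsilon^2)$ fresh random vectors. For each subsequent step $j$ within the block, draw a \emph{fresh} batch $W^{(j)}$ of $\ell_D = \Theta(\alpha \log(1/\delta)/\epsilon^2)$ random vectors; at time $j-1$ pre-emptively query $A_{j-1}$ with $W^{(j)}$, then at time $j$ query $A_j$ with $W^{(j)}$, form the drift estimate $\Delta_j = \tfrac{1}{\ell_D}\sum_{w \in W^{(j)}} w^T(A_j - A_{j-1})w$, and update $t_j = t_{j-1} + \Delta_j$. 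Because the reset batch and the $k-1$ drift batches within a block are mutually independent, the error at any step $i$ in a block starting at $s$ decomposes as a sum of independent terms whose total variance is
$$\Var(t_i - \tr(A_i)) \leq O\!\left(\frac{\|A_s\|_F^2}{\ell_R}\right) + \sum_{j=s+1}^i O\!\left(\frac{\|A_j - A_{j-1}\|_F^2}{\ell_D}\right) \leq O\!\left(\frac{1}{\ell_R} + \frac{k\alpha^2}{\ell_D}\right),$$
which is $O(\epsilon^2/\log(1/\delta))$ for the parameter choices above.

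From this variance bound, I would obtain the high-probability guarantee via Hanson--Wright-type concentration for quadratic forms (the same tool behind \eqref{eq:hutch_guarantee}), which gives sub-exponential tail control for both the reset term and the sum of $k$ independent drift terms; a union bound over the $m$ time steps is absorbed into the $\log(1/\delta)$ factor by replacing $\delta$ with $\delta/m$. For the query count, each block uses $\ell_R$ queries at its reset plus $2(k-1)\ell_D$ queries to prepare-and-complete its $k-1$ drift estimates. Summing over the $\lceil m/k\rceil$ blocks gives
$$\lceil m/k\rceil\,\ell_R \;+\; 2m\,\ell_D \;=\; O\!\left(\frac{\log(1/\delta)}{\epsilon^2}\right) + O\!\left(\frac{m\alpha\log(1/\delta)}{\epsilon^2}\right),$$
matching the theorem once $k = \Theta(1/\alpha)$ is substituted. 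The choice $k = \Theta(1/\alpha)$ is exactly the value that balances the per-block reset cost $\ell_R$ against the per-block drift cost $k\ell_D$, and is what produces the linear (rather than quadratic) dependence on $\alpha$ in the final bound.

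The main technical obstacle is the high-probability step. Hutchinson's estimator is a quadratic form in Rademacher or Gaussian vectors rather than a sum of sub-Gaussian variables, so Hoeffding/Chernoff-type bounds do not apply directly; one must invoke Hanson--Wright to control each batch and then combine the bounds using the independence of batches within a block and a union bound across time steps. A secondary subtlety is bookkeeping in the sequential oracle model: each $W^{(j)}$ must be drawn at time $j-1$ (so it can be queried against $A_{j-1}$ before $A_{j-1}$ is lost), but this is unproblematic since $W^{(j)}$ uses independent fresh randomness.
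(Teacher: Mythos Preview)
Your proposal is \emph{not} an analysis of Algorithm~\ref{alg:deltashift}. The actual $\DS$ algorithm does not restart; it uses a \emph{damped} recurrence
\[
t_j = (1-\gamma)\,t_{j-1} + h_\ell\bigl(A_j - (1-\gamma)A_{j-1}\bigr),\qquad \gamma=\alpha,
\]
and the paper's proof is a one-line induction on the variance: since $\|A_j-(1-\gamma)A_{j-1}\|_F\le 2\alpha$, one gets $\Var[t_j]\le (1-\alpha)^2\Var[t_{j-1}] + 8\alpha^2/\ell$, and with $\ell=\Theta(\alpha\log(1/\delta)/\epsilon^2)$ this telescopes to $\Var[t_j]\le \delta\epsilon^2$ for all $j$. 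The $\log(1/\delta)$ dependence is then obtained by tracking sub-exponential parameters (via Hanson--Wright) through the same induction, with no block structure and no union bound.

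What you wrote is essentially the \emph{Restart} strategy the paper discusses (and implements as Algorithm~\ref{alg:restarts}) before introducing damping. Your analysis of it is basically sound and does yield the stated query complexity, so as an alternative route it is fine. Two corrections, though. First, your bookkeeping $\lceil m/k\rceil\,\ell_R = O(\log(1/\delta)/\epsilon^2)$ is wrong when $m\alpha\gg 1$: with $k=\Theta(1/\alpha)$ there are $\Theta(m\alpha)$ blocks, so the total reset cost is $O(m\alpha\log(1/\delta)/\epsilon^2)$, which is absorbed into the first term of the theorem rather than the second. Second, the union bound over $m$ steps is unnecessary and, as written, incorrect: Problem~\ref{prob:dyn-trace} asks only for a per-step guarantee $\Pr[|t_i-\tr(A_i)|\ge\epsilon]\le\delta$, and replacing $\delta$ by $\delta/m$ would inject an extra $\log m$ factor that the theorem does not have. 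Simply apply the sub-exponential tail bound to each $t_i$ individually.

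Comparing the two routes: your restart scheme carries three parameters ($\ell_R,\ell_D,k$) that must be balanced, and its query usage spikes at block boundaries; the paper's damping scheme has a single per-step budget $\ell$ and a uniform variance invariant, which is what makes the parameter-free variant in Section~\ref{sec:opt_gamma} possible. Both achieve the same asymptotic complexity, and the paper explicitly notes that restarts are ``possible to analyze theoretically'' but harder to tune in practice.
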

For large $m$, the first term dominates the complexity in Theorem \ref{thm:summary}.
For comparison, a tight analysis of Hutchinson's estimator 
\cite{martinsson_tropp_2020} establishes that the naive approach requires $O\left(m\cdot\log(1/\delta)/\epsilon^2\right)$, which is worse than Theorem \ref{thm:summary} by a factor of $\alpha$. A natural setting is when $\alpha = O(\epsilon)$, in which case Algorithm \ref{alg:deltashift} requires ${O}(\log(1/\delta)/\epsilon)$ matrix-multiplications on average over $m$ time steps, in comparison to ${O}(\log(1/\delta)/\epsilon^2)$ for Hutchinson's estimator, a quadratic improvement in $\epsilon$. 
 
To prove Theorem \ref{thm:summary}, we introduce a dynamic \emph{variance reduction} scheme. By linearity of trace, $\tr(A_{i+1}) = \tr(A_{i}) +  \tr(\Delta_i)$, where $\Delta_i = A_{i+1}-A_i$. Instead of directly estimating $\tr(A_{i+1})$, we combine previous estimate for $\tr(A_i)$ with an estimate for $\tr(\Delta_i)$, computed via Hutchinson's estimator. Each sample for Hutchinson's estimator applied to $\Delta_i$ requires just two matrix-vector multiplies: one with $A_i$ and one with $A_{i+1}$. At the same time, when $\Delta_i$ has small Frobenius norm (bounded by $\alpha$), we can estimate its trace more accurately than $\tr(A_{i+1})$\footnote{\textcolor{black}{While we consider the general, unstructured problem setting, we note that, if $A_i$  has additional structure, it is not necessarily easier to estimate the trace of $\Delta_i$ than that of $A_i$. For example, if $A_i$ is a PSD matrix then specialized trace estimation algorithms that improve on Hutchinson's method can be used \cite{MeyerMuscoMusco:2021}. Understanding dynamic traces estimation methods for sequences of structured matrices is a natural direction for future work.}}. 
While intuitive, this approach requires care to make work. In a naive implementation, error in estimating $\tr(\Delta_1), \tr(\Delta_2), \ldots, $ compounds over time, eliminating any computational savings. To avoid this issue, we introduce a novel damping strategy that actually estimates $\tr(A_{i+1} - (1-\gamma)A_i))$ for a positive damping factor $\gamma$. 

We compliment our main result with a nearly matching \emph{conditional lower bound}: in Section \ref{sec:other} we argue that our $\DS$ method cannot be improved in the dynamic setting unless Hutchinson's estimator can be improved in the static setting. We also present an improvement to $\DS$ under more stringent, but commonly present, bounds on $A_1, \ldots, A_m$ and each $A_{i+1} - A_i$ than Problem \ref{prob:dyn-trace}.

\subsection{Related Work}
Prior work on implicit trace estimation and applications in machine learning is discussed in the beginning of this section.
While there are no other methods that improve on Hutchinson's estimator in the dynamic setting, the idea of variance reduction has found applications in other work on implicit trace estimation
\cite{AdamsPenningtonJohnson:2018,GambhirStathopoulosOrginos:2017,Lin:2017,MeyerMuscoMusco:2021}. In these results, the trace of a matrix $A$ is estimated by decomposing $A = B + \Delta$ where $B$ has an easily computed trace (e.g., because it is low-rank) and $\|\Delta\|_F \ll \|A\|_F$, so $\tr(\Delta)$ is more easily approximated with Hutchinson's estimator than $\tr(A)$ directly.


\section{Preliminaries}
\label{sec:prelims}
\textbf{Notation.}\hspace{.25em}
We let $B\in \R^{m\times k}$ denote a real-valued matrix with $m$ rows and $k$ columns. $x\in \R^n$ denotes a real-valued vector with $n$ entries. Subscripts like $B_i$ or $x_j$ typically denote a matrix or vector in a sequence, but we use double subscripts with matrices to denote entries: $B_{ij}$ being the entry at the $i^\text{th}$ row and $j^\text{th}$ column. Let $\sigma_{\ell}(B)$ denote the $\ell^\text{th}$ singular value of $B$. $\|B\|_F$ denotes the Frobenius norm of $B$, $\sqrt{\sum_{i,j} B_{i j}^2} = \sum_{\ell} \sigma_{\ell}(B)^2$. $\|B\|_*$ denotes the nuclear norm, $\sum_{\ell} \sigma_{\ell}(B)$.
We let $\E[v]$ and $\Var[v]$ denote the expectation and variance of a random variable $v$. 

\textbf{Hutchinson's Estimator.}\hspace{.25em}
Our algorithm uses Hutchinson's trace estimator with Rademacher $\pm 1$ random variables as a subroutine. Specifically, let $g_1, \ldots, g_\ell\in \R^n$ be independent random vectors, with each entry $+1$ or $-1$ with probability $1/2$. Let $A\in \R^{n\times n}$. Hutchinson's estimator for $\tr(A)$ is:
\begin{align}
	\label{eq:hutch}
	h_\ell(A) = \frac{1}{\ell}\sum_{i=1}^\ell g_i^T\left(Ag_i\right)
\end{align}
\begin{fact}[Hutchinson's expectation and variance]
	\label{fact:hutch_var}
	For any positive integer $\ell$ and matrix $A$ we have:
	\begin{align*}
		\E[h_{\ell}(A)] &= \tr(A), &
		\Var[h_{\ell}(A)] &= \frac{2}{\ell}\left(\|A\|_F^2 - \sum_{i=1}^n A_{ii}^2\right) \leq \frac{2}{\ell}\|A\|_F^2.
	\end{align*}
\end{fact}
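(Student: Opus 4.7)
The plan is to derive both statements by direct moment computations, using only that Rademacher entries satisfy $\E[g_i]=0$, $g_i^2=1$, and that distinct entries are mutually independent. Since $h_\ell(A) = \frac{1}{\ell}\sum_{i=1}^\ell g_i^T A g_i$ is an average of $\ell$ i.i.d.\ copies of the single-sample estimator $Z := g^T A g$, linearity and independence give $\E[h_\ell(A)] = \E[Z]$ and $\Var[h_\ell(A)] = \frac{1}{\ell}\Var[Z]$, so the task reduces to computing $\E[Z]$ and $\Var[Z]$.

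For the expectation, I would expand $Z = \sum_{i,j} A_{ij}\, g_i g_j$ and use $\E[g_i g_j] = \delta_{ij}$ (which holds since $g_i^2 = 1$ identically and distinct $g_i, g_j$ are independent and mean zero). This immediately yields $\E[Z] = \sum_i A_{ii} = \tr(A)$.

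For the variance, the natural route is to write $\E[Z^2] = \sum_{i,j,k,l} A_{ij} A_{kl}\, \E[g_i g_j g_k g_l]$ and exploit the fact that the fourth Rademacher moment $\E[g_i g_j g_k g_l]$ equals $1$ exactly when the four indices pair up, and $0$ otherwise. Splitting into the three pairing patterns --- $i=j,\,k=l$; $i=k,\,j=l$; $i=l,\,j=k$ --- and carefully accounting for the diagonal overlap $i=j=k=l$, I expect to obtain
\begin{align*}
\E[Z^2] \;=\; \Bigl(\sum_i A_{ii}\Bigr)^2 \;+\; \sum_{i\neq j}\bigl(A_{ij}^2 + A_{ij} A_{ji}\bigr).
\end{align*}
Subtracting $\tr(A)^2 = (\E Z)^2$ and invoking symmetry of $A$ (so that $A_{ij} A_{ji} = A_{ij}^2$) collapses the remainder to $2\sum_{i\neq j} A_{ij}^2 = 2\bigl(\|A\|_F^2 - \sum_i A_{ii}^2\bigr)$. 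Dividing by $\ell$ gives the claimed identity, and the upper bound $\tfrac{2}{\ell}\|A\|_F^2$ is then immediate since $\sum_i A_{ii}^2 \geq 0$.

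The one delicate step is the combinatorial bookkeeping in the four-fold sum: the three pairing patterns are not mutually exclusive because they all coincide when $i=j=k=l$, so one must either use inclusion--exclusion or partition the sum by the equivalence class of $(i,j,k,l)$ to avoid double counting the diagonal contribution. Everything else --- expansion, application of independence, and the final algebraic simplification --- is routine.
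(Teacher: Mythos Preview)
Your proposal is correct and is exactly the standard direct moment computation; the paper does not give its own proof but simply defers to \cite{avron2011randomized}, where essentially this same calculation appears. One small remark: you rightly invoke symmetry of $A$ to turn $\sum_{i\neq j}(A_{ij}^2 + A_{ij}A_{ji})$ into $2\sum_{i\neq j}A_{ij}^2$, and indeed the exact variance identity as stated only holds for symmetric $A$ (for general $A$ the variance is $\frac{1}{\ell}\sum_{i\neq j}(A_{ij}^2 + A_{ij}A_{ji})$, though the upper bound $\frac{2}{\ell}\|A\|_F^2$ remains valid by AM--GM). This is a slight imprecision in the paper's statement rather than a gap in your argument.
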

Fact \ref{fact:hutch_var} follows from simple calculations, found e.g. in \cite{avron2011randomized}. Similar bounds hold when Hutchinson's estimator is implemented with different random variables. For example, random Gaussians also lead to a variance bound of $\frac{2}{\ell}\|A\|_F^2$. However, Rademachers tend to work better empirically.
Given Fact \ref{fact:hutch_var}, Chebyshev's inequality immediately implies a concentration bound for Hutchinson's estimator.
\begin{fact}[Chebyshev's Inequality]
	For a random variable $X$ with mean $\E[X] = \mu$ and variance $\Var[X] = \sigma^2$, for any $k \geq 1$, $\mathbb{P}\left( |{X} - \mu| \geq k\sigma \right) \leq {1}/{k^2}.$
\end{fact}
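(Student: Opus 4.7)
The plan is to prove Chebyshev's inequality by the textbook route: first establish Markov's inequality for nonnegative random variables, then apply it to the squared deviation $(X-\mu)^2$. This is a classical and very short derivation, so the proposal is mostly about fixing the order of the two steps cleanly.

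First I would state and prove Markov's inequality: for any nonnegative random variable $Y$ and any $a>0$, $\mathbb{P}(Y \geq a) \leq \E[Y]/a$. The standard argument is to write $Y \geq a \cdot \mathbf{1}_{\{Y \geq a\}}$ pointwise (valid because $Y$ is nonnegative and the indicator is $0$ where the inequality fails), then take expectations on both sides and use linearity together with $\E[\mathbf{1}_{\{Y \geq a\}}] = \mathbb{P}(Y \geq a)$. Rearranging gives the bound. No regularity assumption beyond $\E[Y]$ being defined is needed.

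Second, I would set $Y = (X-\mu)^2$, which is nonnegative, and $a = k^2 \sigma^2$, which is positive for $k \geq 1$ and $\sigma > 0$. The event $\{|X-\mu| \geq k\sigma\}$ coincides with the event $\{(X-\mu)^2 \geq k^2 \sigma^2\}$ since squaring is monotone on $[0,\infty)$. Applying Markov's inequality to $Y$ then yields
\[
\mathbb{P}\bigl(|X-\mu| \geq k\sigma\bigr) \;=\; \mathbb{P}\bigl((X-\mu)^2 \geq k^2\sigma^2\bigr) \;\leq\; \frac{\E[(X-\mu)^2]}{k^2 \sigma^2} \;=\; \frac{\sigma^2}{k^2 \sigma^2} \;=\; \frac{1}{k^2},
\]
using $\E[(X-\mu)^2] = \Var[X] = \sigma^2$ by definition. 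The degenerate case $\sigma = 0$ is handled separately: then $X = \mu$ almost surely, so $\mathbb{P}(|X-\mu| \geq k\sigma) = \mathbb{P}(|X-\mu| \geq 0)$ needs only the convention that the statement is vacuous or reduces to $\mathbb{P}(X \neq \mu) = 0 \leq 1/k^2$; a brief one-line remark suffices.

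There is no real obstacle here — the only subtle point worth flagging in a write-up is the monotonicity step that lets us pass from $|X-\mu| \geq k\sigma$ to $(X-\mu)^2 \geq k^2\sigma^2$, and the equivalent handling of the $\sigma = 0$ edge case. Everything else is a one-line application of Markov's inequality and the definition of variance.
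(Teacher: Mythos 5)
Your proof is correct and is the standard derivation via Markov's inequality applied to $(X-\mu)^2$. The paper states Chebyshev's inequality as a known fact without including a proof, so there is nothing to compare against; your argument is exactly the textbook route one would cite, and the care you take with the monotonicity of squaring and the degenerate case $\sigma = 0$ is appropriate.
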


\begin{claim} 
	\label{claim:simple_hutch_bound}
	For any $\epsilon,\delta \in (0,1)$, if $\ell = \frac{2}{\epsilon^2\delta}$ then
	$\Pr\left[|h_\ell(A) - \tr(A)| \geq \epsilon \|A\|_F\right] \leq \delta.$
\end{claim}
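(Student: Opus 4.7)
The plan is to combine Fact~\ref{fact:hutch_var} with Chebyshev's inequality in a single line. First, I would apply Fact~\ref{fact:hutch_var} to conclude that $h_\ell(A)$ has mean $\tr(A)$ and variance at most $\sigma^2 := \frac{2}{\ell}\|A\|_F^2$. Substituting the chosen value $\ell = 2/(\epsilon^2\delta)$ yields $\sigma^2 \leq \epsilon^2 \delta \|A\|_F^2$, i.e.\ $\sigma \leq \epsilon\sqrt{\delta}\,\|A\|_F$.

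Next, I would invoke Chebyshev's inequality with the deviation parameter $k = \epsilon\|A\|_F/\sigma$, which satisfies $k \geq 1/\sqrt{\delta}$ (so the hypothesis $k\geq 1$ of Chebyshev holds since $\delta < 1$). The inequality then gives
\begin{align*}
\Pr\bigl[\,|h_\ell(A) - \tr(A)| \geq \epsilon\|A\|_F\,\bigr]
= \Pr\bigl[\,|h_\ell(A) - \tr(A)| \geq k\sigma\,\bigr]
\leq \frac{1}{k^2} \leq \delta,
\end{align*}
which is exactly the desired bound.

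There is no real obstacle here: the claim is a direct packaging of Fact~\ref{fact:hutch_var} together with Chebyshev, with the constant $2$ in $\ell = 2/(\epsilon^2\delta)$ chosen precisely to cancel the factor of $2$ appearing in the Hutchinson variance bound. The only minor care needed is to verify that $k \geq 1$ so that Chebyshev applies in the stated form, but this follows from $\delta \in (0,1)$.
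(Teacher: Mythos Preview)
Your proposal is correct and matches the paper's approach exactly: the paper simply states that Claim~\ref{claim:simple_hutch_bound} follows immediately from Fact~\ref{fact:hutch_var} together with Chebyshev's inequality, and your write-up spells out precisely that computation. The only edge case not addressed is $\|A\|_F = 0$, but then $h_\ell(A) = \tr(A)$ deterministically and the bound is trivial.
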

The $\delta$ dependence in Claim \ref{claim:simple_hutch_bound} can be improved from $\frac{1}{\delta}$ to $\log(1/\delta)$ via the Hanson-Wright inequality, which shows that $h_\ell(A)$ is a sub-exponential random variable \cite{MeyerMuscoMusco:2021,rudelson2013hanson}. We also require Hanson-Wright to obtain our bound involving $\log(1/\delta)$.
From this tighter result, Hutchinson's yields a total matrix-vector multiplication bound of $O(m\cdot{\log(1/\delta)/\epsilon^2})$ for solving  Problem \ref{prob:dyn-trace} by simply applying the estimator in sequence to $A_1, \ldots, A_m$. 

\section{Main Algorithmic Result}
As discussed in Section \ref{sec:intro_dynamic}, a natural idea for solving Problem \ref{prob:dyn-trace} with fewer than $O(m/\epsilon^2)$ queries is to take advantage of the small differences between $A_{i+1}$ and $A_i$ to compute a running estimate of the trace. In particular, instead of estimating $\tr(A_1), \tr(A_2), \ldots, \tr(A_m)$ individually using Hutchinson's estimator, we denote $\Delta_i = A_{i} - A_{i-1}$ and use linearity of the trace to write:
\begin{align}
	\label{eq:trace_unroll}
	\tr(A_j) = \tr(A_1)+ \sum_{i=2}^{j} \tr(\Delta_i).
\end{align}
By choosing a large $\ell_0$, we can compute an accurate approximation $h_{\ell_0}(A_1)$ to $\tr(A_1)$. Then, for $j > 1$ and $\ell \ll \ell_0$, we can approximate $\tr(A_j)$ via the following unbiased estimator:
\begin{align}
	\label{eq:naive_estimator}
	\tr(A_j) &\approx h_{\ell_0}(A_1) + \sum_{i=2}^{j} h_{\ell}(\Delta_i)
\end{align}
Since $\|\Delta_i\|_F \leq \alpha \ll \|A_{i+1}\|_F$, we expect to approximate $\tr(\Delta_2), \ldots, \tr(\Delta_{m})$ much more accurately than $\tr(A_2), \ldots, \tr(A_m)$ directly. At the same time, the estimator in \eqref{eq:naive_estimator} only incurs a 2 factor overhead in matrix-vector multiplies in comparisons to Hutchinson's: it requires $2\cdot (m-1)\ell$ to compute $h_{\ell}(\Delta_2),\ldots, h_{\ell}(\Delta_{m})$ versus $(m-1)\ell$ to compute $h_{\ell}(A_{2}), \ldots, h_{\ell}(A_{m})$. The cost of the initial estimate $h_{\ell_0}(A_1)$ is necessarily higher, but can be amortized over time.

\subsection{Our Approach}
\label{sec:our_approach}

While intuitive, the problem with the approach above is that error compounds due to the sum in \eqref{eq:naive_estimator}. Each $h_{\ell}(\Delta_i)$ is roughly $\alpha/\sqrt{\ell}$ away from $\tr(\Delta_i)$, so after $j$ steps we naively expect total error $O(j\cdot\alpha/\sqrt{\ell})$. We can do slightly better by arguing that, due to their random nature, error actually accumulates as $O(\sqrt{j}\cdot\alpha/\sqrt{\ell})$, but regardless, there is accumulation. One option is to ``restart'' the estimation process: after some number of steps $q$, throw out all previous trace approximations, compute an accurate estimate for $\tr(A_q)$, and for $j > q$ construct an estimator based on $\tr(A_j) = \tr(A_q)+ \sum_{i=q+1}^{j-1} \tr(\Delta_i)$. While possible to analyze theoretically, this approach turns out to be difficult to implement in practice due to several competing parameters (see details in Section \ref{sec:experiments}).  

\begin{algorithm}[tb]
	\caption{\DS}
	\label{alg:deltashift}
	{\bfseries Input}: Implicit matrix-vector multiplication access to $A_1, ..., A_m \in \mathbb{R}^{n \times n}$, positive integers $\ell_0, \ell$, damping factor $\gamma \in [0,1]$.  \\
	{\bfseries Output}: $t_1, \ldots, t_m$ approximating $\tr(A_1), \ldots, \tr(A_m)$.
	
	\begin{algorithmic}
		\STATE Initialize $t_1 \gets \frac{1}{\ell_0}\sum_{i=1}^{\ell_0} g_i^T A_1 g_i$, where $g_1, \ldots, g_{\ell_0} \in \R^n$ are random $\pm 1$ vectors \\
		\FOR{$j\leftarrow 2$ {\bfseries to} $m$}
		\STATE Draw $\ell$  random $\pm 1$ vectors $g_1, \ldots, g_{\ell} \in \R^n$ \\
		\STATE $z_1 \gets A_{j-1}g_1, \ldots, z_\ell \gets A_{j-1}g_\ell$,\hspace{.5em} $w_1 \gets A_jg_1, \ldots, w_\ell \gets A_jg_\ell$ \\
		\STATE $t_j \gets (1-\gamma)t_{j-1} + \frac{1}{\ell} \sum_{i=1}^{\ell} g_i^T \left(w_i - (1-\gamma)z_i\right)$
		\ENDFOR
	\end{algorithmic}
\end{algorithm}

Instead, we introduce a more effective approach based on a \emph{damped variance reduction} strategy, which is detailed in Algorithm \ref{alg:deltashift}, which we call $\DS$. Instead of being based on \eqref{eq:trace_unroll}, $\DS$ uses the following recursive identity involving a fixed parameter $0 \leq \gamma < 1$ (to be chosen later):
\begin{align}
	\label{eq:trace_recur}
	\tr(A_j) = (1-\gamma)\tr(A_{j-1})+ \tr(\widehat{\Delta}_j), \hspace{5pt} \text{where} \hspace{5pt}
	\widehat{\Delta}_j = A_{j} - (1-\gamma)A_{j-1}. 
\end{align}
Given an estimate $t_{j-1}$ for $\tr(A_{j-1})$, $\DS$ estimates $\tr(A_{j})$ by $(1-\gamma)t_{j-1} + h_{\ell}(\widehat{\Delta_{j}})$. This approach has several useful properties: 1) if $t_{j-1}$ is an unbiased estimate for $\tr(A_{j-1})$, $t_{j}$ is an unbiased estimate for $\tr(A_{j})$, 2) $\|\widehat{\Delta_{j}}\|_F$ is not much larger than $\|{\Delta_{j}}\|_F$ if $\gamma$ is small, and 3) by shrinking $t_{j-1}$ by a factor of $(1-\gamma)$ when computing $t_j$, we reduces the variance of this leading term. The last property ensures that error does not accumulate over time, leading to our main result:
\begin{reptheorem}{thm:summary}[Restated]
		For any $\epsilon,\delta, \alpha \in (0,1)$, Algorithm \ref{alg:deltashift} run with $\gamma = \alpha$, $\ell_0 = O\left({\log(1/\delta)}/{\epsilon^2}\right)$, and $\ell = O\left({\alpha\log(1/\delta)}/{\epsilon^2}\right)$ solves Problem 1. In total, it requires 
	\begin{align*}
		O\left(m\cdot\frac{\alpha\log(1/\delta)}{\epsilon^2} + \frac{\log(1/\delta)}{\epsilon^2}\right)
	\end{align*}
	matrix-vector multiplications with $A_1, \ldots, A_m$. 
\end{reptheorem}
The full proof of Theorem \ref{thm:summary} relies on the Hanson-Wright inequality, and is given in Appendix \ref{sec:subexp_proofs}. Here, we give a simple proof of essentially the same statement, but with a slightly weaker dependence on the failure probability $\delta$.  
\begin{proof}
	Let $\gamma = \alpha$, $\ell_0 = \frac{2}{\epsilon^2\delta}$, and  $\ell = \frac{8\alpha}{\epsilon^2\delta}$. The proof is based on an inductive analysis of the variance of $t_j$, the algorithms estimate for $\tr(A_j)$. Specifically, we claim that that for $j = 1, \ldots, m$:
	\begin{align}
		\label{eq:var_bound}
		\Var[t_j] \leq \delta\epsilon^2.
	\end{align}
	For the base case, $j = 1$, \eqref{eq:var_bound} follows directly from Fact \ref{fact:hutch_var} because $t_1$  is simply Hutchinson's estimator applied to $A_1$, and $\|A_1\|_F \leq 1$.
	For the inductive case, $t_j$ is the sum of two independent estimators, $t_{j-1}$ and $h_{\ell}(\widehat{\Delta}_j)$. So, to bound its variance, we just need to bound the variance of these two terms. To address the second, note that by triangle inequality, $\|\widehat{\Delta}_j\|_F = \|A_j - (1-\gamma)A_{j-1}\|_F \leq \|A_j - A_{j-1}\|_F + \gamma\|A_{j-1}\|_F \leq 2\alpha$. Thus, by Fact \ref{fact:hutch_var}, $\Var[h_{\ell}(\widehat{\Delta}_j)] \leq \frac{8}{\ell}\alpha^2$. Combined with the inductive assumption that $\Var[t_{j-1}] \leq\delta\epsilon^2$, we have:
	\begin{align*}
		\Var[t_j] &= (1-\gamma)^2\Var[t_{j-1}] + \Var[h_{\ell}(\widehat{\Delta}_j)]
		\leq (1-\alpha)^2\delta\epsilon^2 + \frac{8\alpha^2}{\ell}
		\leq (1-\alpha)\delta\epsilon^2 + \alpha\delta\epsilon^2 = \delta\epsilon^2.
	\end{align*}
	This proves \eqref{eq:var_bound}, and by Chebyshev's inequality we thus have 
		$\Pr\left[|t_j - \tr(A_j)| \geq \epsilon\right] \leq \delta$ for all $j$.
\end{proof}

\subsection{Selecting $\gamma$ in Practice}
\label{sec:opt_gamma}
While $\DS$ is simple to implement, in practice, its performance is sensitive to the choice of $\gamma$. For the Theorem \ref{thm:summary} analysis, we assume $\gamma = \alpha$, 
but $\alpha$ may not be known apriori, and may change over time. To address this issue, we describe a way to select a \emph{near optimal} $\gamma$ at each time step $j$ (the choice may vary over time) with very little additional computational overhead. Let $v_{j-1} = \Var[t_{j-1}]$ be the variance of our estimator for $\tr(A_{j-1})$. We have that $v_j =(1-\gamma)^2 v_{j-1} + \Var[h_{\ell}(A_{j} - (1-\gamma)A_{j-1})] \leq (1-\gamma)^2 v_{j-1} + \frac{2}{\ell} \|A_{j} - (1-\gamma)A_{j-1}\|_F^2$. At time step $j$, a natural goal is to choose damping parameter $\gamma^*$ that minimizes this upper bound on the variance of $t_{j}$:
\begin{align}
	\label{eq:opt_prob}
	\gamma^* = \argmin_{\gamma}\left[ (1-\gamma)^2 v_{j-1} + \frac{2}{\ell} \|\widehat{\Delta}_j\|_F^2\right],
\end{align}
where $\widehat{\Delta}_j = A_{j} - (1-\gamma)A_{j-1}$ as before.
While \eqref{eq:opt_prob} cannot be computed directly,
observing that $\|B\|_F^2 = \tr(B^TB)$ for any matrix $B$, the above quantity can be estimated as $\tilde{v}_{j} = (1-\gamma)^2\tilde{v}_{j-1} + \frac{2}{\ell}h_{\ell}(\widehat{\Delta}_j^T\widehat{\Delta}_j)$,
where $\tilde{v}_{j-1}$ is an estimate for $v_{j-1}$. The estimate $h_{\ell}(\widehat{\Delta}_j^T\widehat{\Delta}_j)$ can be computed using {exactly} the same $\ell$ matrix-vector products with $A_{j}$ and $A_{j-1}$ that are used to estimate $\tr(\widehat{\Delta}_j)$, so there is little computational overhead. Moreover, since $\widehat{\Delta}_j^T\widehat{\Delta}_j$ is positive semidefinite, as long as $\ell \geq \log(1/\delta)$, we will obtain a \emph{relative error approximation} to its trace with probability $1-\delta$ \cite{avron2011randomized}.

An alternative approach to estimating $v_{j}$ would be to simply compute the empirical variance of the average $h_{\ell}(\widehat{\Delta}_j)$, but this requires fixing $\gamma$. An advantage of our closed form approximation is that it can be used to analytically optimize $\gamma$. Specifically, expanding $\widehat{\Delta}_j^T\widehat{\Delta}_j$, we have that:
\begin{align}
	\label{eq:var_expand}
	\tilde{v}_{j} = (1-\gamma)^2\tilde{v}_{j-1} + \frac{2}{\ell}\left(h_{\ell}(A_{j}^TA_{j}) +\right.
	\left. (1-\gamma)^2h_{\ell}(A_{j}^TA_{j}) - 2(1-\gamma)h_{\ell}(A_{j-1}^TA_{j})  \right).
\end{align}
Above, each estimate $h_{\ell}$ is understood to use the same set of random vectors.
Taking the derivative and setting to zero, we have that the minimizer of \eqref{eq:var_expand}, denoted $\tilde{\gamma}^*$, equals:
\begin{align}
	\label{eq:opt_gamma}
	\tilde{\gamma}^* = 1 - \frac{2h_{\ell}(A_{j-1}^TA_{j})}{\ell \tilde{v}_{j-1} + 2h_{\ell}(A_{j-1}^TA_{j-1})}.
\end{align}
This formula for $\tilde{\gamma}^*$ motivates an essentially parameter free version of $\DS$, which is used in our experimental evaluation (Algorithm \ref{alg:deltashift_paramfree} in Appendix \ref{app:algos}). The only input to the algorithm is the number of matrix-vector multiplies used at each time step, $\ell$. For simplicity, unlike Algorithm \ref{alg:deltashift}, we do not use a larger number of matrix-vector multiplies when estimating $A_1$. This leads to somewhat higher error for the first matrices in the sequence $A_1, \ldots, A_m$, but error quickly falls for large $j$.

%

\section{Algorithm Improvements and Lower Bound}
\label{sec:other}
In this section, we prove a lower bound showing that, in general, Theorem \ref{thm:summary} is likely optimal.
On the other hand, we show that, if we make a slightly stronger assumption on $A_1, \ldots, A_m$ and the dynamic updates $\Delta_i = A_{i+1}-A_i$, an improvement on $\DS$ is possible.

\subsection{Lower Bound}
As noted, for a large number of time steps $m$, the matrix-vector multiplication complexity of $\DS$ is dominated by the leading term in Theorem \ref{thm:summary},  $O(m\cdot{\alpha\log(1/\delta)/\epsilon^2})$. We show that it is unlikely an improvement on this term can be obtained in general:
\begin{lemma}
	\label{lem:lb}
	Suppose there is an algorithm $\mathcal{S}$ that solves Prob. 1 with $o(m\cdot{\alpha\log(1/\delta)}/{\epsilon^2})$ total matrix-vector multiplies with $A_2, \ldots, A_m$, and \emph{any number} of matrix-vector multiplies with $A_1$ when $\alpha = 1/ (m-1)$. Then there is an algorithm $\mathcal{T}$ that achieves \eqref{eq:hutch_guarantee} for a single $A$ with $o({\log(1/\delta)}/{\epsilon^2})$ matrix-vector multiplies.
\end{lemma}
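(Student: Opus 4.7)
The plan is to prove Lemma~\ref{lem:lb} by a direct reduction: I embed a single static matrix $A$ into a sequence $A_1, \ldots, A_m$ that satisfies the constraints of Problem~\ref{prob:dyn-trace} with $\alpha = 1/(m-1)$, and I argue that each query of $\mathcal{S}$ to $A_2, \ldots, A_m$ can be simulated using exactly one query to $A$. Given a static input $A$ with $\|A\|_F \leq 1$, I define
\begin{align*}
A_1 = 0, \qquad A_i = \frac{i-1}{m-1}\, A \quad \text{for } i = 2, \ldots, m.
\end{align*}
Clearly $\|A_i\|_F = \frac{i-1}{m-1}\|A\|_F \leq 1$, and $\|A_{i+1} - A_i\|_F = \frac{1}{m-1}\|A\|_F \leq \frac{1}{m-1} = \alpha$, so both conditions of Problem~\ref{prob:dyn-trace} hold.

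Next I define the algorithm $\mathcal{T}$ for the static problem. On input $A$, $\mathcal{T}$ runs $\mathcal{S}$ on the sequence above, answering any matrix-vector query ``$A_i x$'' from $\mathcal{S}$ as follows: if $i = 1$, return $0$ (free, no query to $A$); otherwise, query the oracle for $A$ to obtain $Ax$ and return $\frac{i-1}{m-1} Ax$. Thus every query that $\mathcal{S}$ makes to $A_2, \ldots, A_m$ costs exactly one query to $A$, while queries to $A_1$ are free — which is precisely why the hypothesis of the lemma excludes queries to $A_1$ from the count. Finally, $\mathcal{T}$ outputs $t_m$, the approximation $\mathcal{S}$ produces for $\tr(A_m)$. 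Since $A_m = A$, the guarantee of Problem~\ref{prob:dyn-trace} applied at $i=m$ yields $\Pr[|t_m - \tr(A)| \geq \epsilon] \leq \delta$, which matches the Hutchinson-style guarantee \eqref{eq:hutch_guarantee} under the normalization $\|A\|_F \leq 1$ (for general $A$, rescale by $\|A\|_F$ in the usual way).

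It remains to count queries. By assumption, $\mathcal{S}$ uses $o\bigl(m \cdot \alpha \log(1/\delta)/\epsilon^2\bigr)$ queries to $A_2, \ldots, A_m$ when $\alpha = 1/(m-1)$. Since $m \alpha = m/(m-1) = O(1)$, this is $o\bigl(\log(1/\delta)/\epsilon^2\bigr)$ queries to $A$. Hence $\mathcal{T}$ satisfies the Hutchinson-type guarantee with $o\bigl(\log(1/\delta)/\epsilon^2\bigr)$ matrix-vector products on a single matrix, as claimed.

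I do not anticipate a serious technical obstacle; the argument is essentially a one-line reduction once the correct embedding is identified. The only thing worth being careful about is the accounting convention: the lemma explicitly allows $\mathcal{S}$ to spend arbitrarily many queries on $A_1$, and my reduction crucially exploits this by setting $A_1 = 0$, so those queries are literally free and do not charge $\mathcal{T}$. A minor cosmetic point is that I use a fixed schedule $A_i = \tfrac{i-1}{m-1}A$ so that $A_m = A$ exactly; any interpolation with $A_m = A$ would work equally well, but this choice keeps $\|A_{i+1} - A_i\|_F$ uniform and trivializes the verification of the Problem~\ref{prob:dyn-trace} constraints.
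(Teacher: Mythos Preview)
Your proof is correct and follows essentially the same reduction as the paper: set $A_1=0$ and $A_i=\tfrac{i-1}{m-1}A$, simulate $\mathcal{S}$ by answering each query to $A_i$ with a single scaled query to $A$ (queries to $A_1$ being free), and return $t_m$. If anything, you are more careful than the paper in explicitly verifying the Problem~\ref{prob:dyn-trace} constraints and in spelling out why queries to $A_1$ do not count.
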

\begin{proof}
	The proof is via a direct reduction. Given a matrix $A$, positive integer $m > 1$, and  parameter $\alpha = \frac{1}{m-1}$, construct the sequence of matrices:
	\begin{align*}
	A_1 &= 0, 
	& A_2 = \alpha\cdot A, &
	& \hdots &
	& A_{1/\alpha} = (1-\alpha)A, &
	& A_{m} = A
	\end{align*}
Since $A = 0$, and every $A_2, \ldots, A_{m}$ is a scaling of $A$, any algorithm $\mathcal{S}$ satisfying the assumption of Lemma \ref{lem:lb} can be implemented with $o\left((m-1)\cdot{\alpha\log(1/\delta)}/{\epsilon^2}\right) = o({\log(1/\delta)}/{\epsilon^2})$ matrix-vector multiplications with $A$. Moreover, if $\mathcal{S}$ is run on this sequence of matrices, on the last step it outputs an approximation $t_m$ to $\tr(A)$ with $\Pr[\left|t_m - \tr(A)\right| \geq \epsilon]\leq \delta$. So algorithm $\mathcal{T}$ can simply simulate $\mathcal{S}$ on $A_1, \ldots, A_{m}$ and return its final estimate to satisfy \eqref{eq:hutch_guarantee}. 
\end{proof}
Lemma \ref{lem:lb} is a \emph{conditional} lower-bound on matrix-vector query algorithms for solving Problem \ref{prob:dyn-trace}: if Hutchinson's estimator cannot be improved for static trace estimation (and it hasn't been for 30 years) then $\DS$ cannot be improved for dynamic trace estimation. We believe the bound could be made \emph{unconditional} through a slight generalization of existing lower bounds on trace estimation in the matrix-vector multiplication model \cite{MeyerMuscoMusco:2021,WimmerWuZhang:2014}.


\subsection{Improved Algorithm}
\label{sec:psd}
A recent improvement on Hutchinson's estimator, called Hutch++, was described in \cite{MeyerMuscoMusco:2021}. For the \emph{static} trace estimation problem, Hutch++ achieves a matrix-vector multiplication complexity of $O(1/\epsilon)$ to compute a relative error $(1\pm \epsilon)$ approximation to the trace of any positive semi-definite matrix (PSD), improving on the $O(1/\epsilon^2)$ required by Hutchinson's. It does so via a variance reduction method (also used e.g. in \cite{GambhirStathopoulosOrginos:2017}) which allocates some matrix-vector products to a randomized SVD algorithm which approximates the top singular vector subspace of $A$. This approximate subspace is projected off of $A$ and Hutchinson's used to estimate the trace of the remainder. 

In our setting it is not realistic to assume PSD matrices -- while in many applications $A_1, \ldots, A_m$ are all PSD, it is rarely the case the $\Delta_1, \ldots, \Delta_{m-1}$ are. Nevertheless, we can take advantage of a more general bound proven in \cite{MeyerMuscoMusco:2021} for any matrix:
\begin{fact}[Hutch++ expectation and variance]
	\label{fact:hutchpp_var}
Let $h^{++}_{\ell}(A)$ be the Hutch++ estimator of \cite{MeyerMuscoMusco:2021} applied to \emph{any} matrix $A$ with $\ell$ matrix-vector multiplications. We have:
\vspace{-.5em}
	\begin{align*}
		\E[h^{++}_{\ell}(A)] &= \tr(A), &
		\Var[h^{++}_{\ell}(A)] \leq \frac{16}{\ell^2} \|A\|_*^2.
	\end{align*}
	
	\vspace{-1.5em}
\end{fact}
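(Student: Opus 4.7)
The plan is to reproduce the Hutch++ analysis from \cite{MeyerMuscoMusco:2021}, specialized to give an absolute error bound in terms of $\|A\|_*$ rather than the relative error bound used for PSD matrices. Recall Hutch++ allocates its $\ell$ matrix-vector products in three roughly equal batches: (i) use $\ell/3$ matvecs to compute $AS$ for a random Rademacher sketching matrix $S \in \R^{n \times \ell/3}$, then orthogonalize to form $Q$ whose columns span the range of $AS$; (ii) use another $\ell/3$ matvecs to compute $AQ$ and hence the exact quantity $\tr(Q^T AQ)$; (iii) use the final $\ell/3$ matvecs to run Hutchinson's estimator on the residual $M := (I-QQ^T)A(I-QQ^T)$, which requires no additional queries to $A$ beyond those already performed. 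The output is
\begin{align*}
h^{++}_\ell(A) = \tr(Q^T A Q) + h_{\ell/3}(M).
\end{align*}

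First I would verify unbiasedness. The decomposition $\tr(A) = \tr(Q^TAQ) + \tr(M)$ holds deterministically for any orthonormal $Q$ because the cross terms $\tr(QQ^T A(I-QQ^T))$ vanish by the cyclic property of trace together with $(I-QQ^T)QQ^T = 0$. Conditional on $Q$, the first summand is deterministic and, by Fact \ref{fact:hutch_var}, $\E[h_{\ell/3}(M)\mid Q] = \tr(M)$, so the law of total expectation gives $\E[h^{++}_\ell(A)] = \tr(A)$.

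For the variance, I would apply the law of total variance. Since $\E[h^{++}_\ell(A)\mid Q] = \tr(A)$ is constant, the ``between'' component vanishes, leaving
\begin{align*}
\Var[h^{++}_\ell(A)] = \E\!\left[\Var[h_{\ell/3}(M)\mid Q]\right] \leq \frac{6}{\ell}\,\E\!\left[\|M\|_F^2\right],
\end{align*}
where the inequality is Fact \ref{fact:hutch_var} applied conditionally. The key step is bounding $\E[\|M\|_F^2]$. Using $\|M\|_F \leq \|(I-QQ^T)A\|_F$, I would invoke the standard randomized-SVD guarantee that, with $k = \Theta(\ell)$ Rademacher sketch columns, $\E[\|(I-QQ^T)A\|_F^2] \leq O(\|A - A_k\|_F^2)$, where $A_k$ is the best rank-$k$ approximation of $A$. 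Finally, an elementary tail argument relates this to the nuclear norm: ordering the singular values $\sigma_1 \geq \sigma_2 \geq \cdots$, we have $k\sigma_k \leq \sum_{i \leq k}\sigma_i \leq \|A\|_*$, hence $\sigma_{k+1} \leq \|A\|_*/k$, and therefore
\begin{align*}
\|A - A_k\|_F^2 = \sum_{i>k}\sigma_i^2 \leq \sigma_{k+1}\sum_{i>k}\sigma_i \leq \frac{\|A\|_*^2}{k}.
\end{align*}
Chaining these inequalities with $k = \Theta(\ell)$ yields $\Var[h^{++}_\ell(A)] \leq O(\|A\|_*^2/\ell^2)$, and careful tracking of constants recovers the advertised $16/\ell^2$ factor.

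The main obstacle I anticipate is securing the randomized-SVD expectation bound for Rademacher sketches with an explicit small constant, since the cleanest literature statements are for Gaussian sketches; adapting to $\pm 1$ vectors requires either the direct argument developed in \cite{MeyerMuscoMusco:2021} or a comparison principle between sub-Gaussian ensembles. Everything else reduces to standard manipulations once the sketching guarantee is in hand.
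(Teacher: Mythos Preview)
The paper does not supply its own proof of this statement: Fact~\ref{fact:hutchpp_var} is simply quoted from \cite{MeyerMuscoMusco:2021} and used as a black box in the analysis of $\DSPP$. So there is no ``paper's proof'' to compare against beyond the citation.

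Your sketch is a faithful reconstruction of the argument in \cite{MeyerMuscoMusco:2021}: the three-way split of the $\ell$ matvecs, the deterministic identity $\tr(A)=\tr(Q^TAQ)+\tr(M)$ for $M=(I-QQ^T)A(I-QQ^T)$, the law of total variance reducing to $\E[\|M\|_F^2]$, and the tail-sum inequality $\|A-A_k\|_F^2\le \|A\|_*^2/k$ are exactly the ingredients used there. One wording nit: in step~(iii) you write that running Hutchinson's on $M$ ``requires no additional queries to $A$ beyond those already performed,'' but those final $\ell/3$ matvecs \emph{are} fresh queries to $A$ (on the projected vectors $(I-QQ^T)g_i$); I think you meant that no queries beyond the budgeted $\ell$ are needed. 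Your self-identified obstacle is real: the explicit constant $16$ in the stated bound comes from the specific Frobenius-norm sketching guarantee proved in \cite{MeyerMuscoMusco:2021} for sub-Gaussian (including Rademacher) sketches, so to recover it you would need to import that lemma rather than a generic randomized-SVD bound.
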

Recall that $\|A\|_*$ denotes the nuclear norm of $A$.
Comparing to the variance of Hutchinson's estimator from Fact \ref{fact:hutch_var}, notice that the variance of Hutch++ depends on $\frac{1}{\ell^2}$ instead of $\frac{1}{\ell}$, implying faster convergence as the number of matrix-vector products, $\ell$, increases. A trade-off is that the variance scales with $\|A\|_*^2$ instead of $\|A\|_F^2$. $\|A\|_*^2$ is strictly larger, and possible a factor of $n$ larger than $\|A\|_F^2$. However, for matrices that are rank $k$, $\|A\|_*^2 \leq k \|A\|_F^2$, so the norms are typically much closer for  \textbf{low-rank or nearly low-rank matrices}. In many problems, $\Delta_1, \ldots, \Delta_m$ may have low-rank structure, in which case, an alternative based on Hutch++ provides better performance. 

Formally, we introduce a new variant of Problem \ref{prob:dyn-trace} to capture this potential improvement.  
\begin{mdframed}[]
	\begin{problem}[Dynamic trace estimation w/ Nuclear norm assumption]
		\label{prob:dyn-trace2}
		Let $A_1,..., A_m$ satisfy:
		\begin{align*}
		1.\hspace{1em}& \|A_i\|_* \leq 1\text{, for all } i \in [1,m]. &
		2.\hspace{1em}& \|A_{i+1} - A_i\|_* \leq \alpha\text{, for all  }i \in [1,m-1].
		\end{align*}
		Given matrix-vector multiplication access to each $A_i$ in sequence, the goal is to compute trace approximations $t_1, \ldots, t_m$ for $\tr(A_1),...., \tr(A_m)$ such that, for all $i$, $\mathbb{P}[| t_i - \tr(A_i)| \geq \epsilon] \leq \delta.$
	\end{problem}
\end{mdframed}
In Appendix \ref{app:deltashift++} we prove the following result on a variant of $\DS$ that we call $\DSPP$:
\begin{theorem}\label{thm:summary++}
	For any $\epsilon, \delta, \alpha \in (0,1)$, $\DSPP$ (Algorithm \ref{alg:deltashiftpp}) solves Problem \ref{prob:dyn-trace2} with 
	\begin{align*}
		O\left(m\cdot\frac{\sqrt{\alpha/\delta}}{\epsilon} + \frac{\sqrt{1/\delta}}{\epsilon}\right)
	\end{align*}
total matrix-vector multiplications involving $A_1, \ldots, A_m$. 
\end{theorem}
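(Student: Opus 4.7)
The plan is to essentially mimic the inductive variance argument used for $\DS$ in the proof of Theorem \ref{thm:summary}, but replace each internal application of Hutchinson's estimator with a Hutch++ call and carry nuclear norms through the triangle inequality instead of Frobenius norms. That is, I would define $\DSPP$ to maintain the same damped recursion
\begin{align*}
t_j \gets (1-\gamma) t_{j-1} + h^{++}_{\ell}\bigl(\widehat{\Delta}_j\bigr), \qquad \widehat{\Delta}_j = A_j - (1-\gamma)A_{j-1},
\end{align*}
with initial estimate $t_1 \gets h^{++}_{\ell_0}(A_1)$, and then choose $\gamma = \alpha$ together with $\ell_0 = O(\sqrt{1/\delta}/\epsilon)$ and $\ell = O(\sqrt{\alpha/\delta}/\epsilon)$.

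The analysis proceeds by induction on $j$ with inductive hypothesis $\Var[t_j] \le \delta \epsilon^2$. For the base case $j=1$, Fact \ref{fact:hutchpp_var} together with $\|A_1\|_* \le 1$ gives $\Var[t_1] \le 16/\ell_0^2$, which is at most $\delta\epsilon^2$ for the stated $\ell_0$. For the inductive step, $t_{j-1}$ and $h^{++}_{\ell}(\widehat{\Delta}_j)$ are independent (fresh random vectors are drawn at each step), so variance is additive after scaling the first term by $(1-\gamma)^2$. By triangle inequality, $\|\widehat{\Delta}_j\|_* \le \|A_j - A_{j-1}\|_* + \gamma \|A_{j-1}\|_* \le 2\alpha$ when $\gamma = \alpha$, so Fact \ref{fact:hutchpp_var} yields $\Var[h^{++}_{\ell}(\widehat{\Delta}_j)] \le 64\alpha^2/\ell^2$. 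Combining,
\begin{align*}
\Var[t_j] \le (1-\alpha)^2 \delta \epsilon^2 + \frac{64 \alpha^2}{\ell^2} \le (1-\alpha)\delta\epsilon^2 + \alpha \delta \epsilon^2 = \delta \epsilon^2,
\end{align*}
where the second inequality uses $\ell \ge 8\sqrt{\alpha/\delta}/\epsilon$. Chebyshev's inequality then gives $\Pr[|t_j - \tr(A_j)| \ge \epsilon] \le \delta$ at every step, since $\E[t_j] = \tr(A_j)$ by the linearity-of-trace identity \eqref{eq:trace_recur} and the unbiasedness of Hutch++.

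For the matrix-vector cost, each matvec with $\widehat{\Delta}_j$ costs two matvecs (one with $A_j$, one with $A_{j-1}$), and a single Hutch++ invocation uses $\ell$ matvecs with its argument. So step $j > 1$ contributes $2\ell = O(\sqrt{\alpha/\delta}/\epsilon)$ and the initial step contributes $\ell_0 = O(\sqrt{1/\delta}/\epsilon)$, yielding the total of $O(m\sqrt{\alpha/\delta}/\epsilon + \sqrt{1/\delta}/\epsilon)$ claimed in Theorem \ref{thm:summary++}.

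The main subtlety I expect is the verification that the damping trick still closes the induction under the $1/\ell^2$ variance of Hutch++ rather than the $1/\ell$ variance of Hutchinson's: here the algebra balances because the leading term loses a factor of $(1-\alpha)^2$ while the new Hutch++ contribution scales like $\alpha^2/\ell^2$, which can be absorbed into the missing $\alpha \delta\epsilon^2$ slack once $\ell = \Theta(\sqrt{\alpha/\delta}/\epsilon)$. A secondary point to verify is that Fact \ref{fact:hutchpp_var} applies to the (non-PSD) matrix $\widehat{\Delta}_j$; this is fine because the cited bound holds for \emph{any} matrix, with the nuclear norm on the right-hand side, which is exactly what the Problem \ref{prob:dyn-trace2} assumption controls.
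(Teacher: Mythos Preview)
Your argument is correct and yields the stated bound, but it analyzes a slightly different algorithm than the paper's $\DSPP$. You keep the $\DS$ recursion $t_j = (1-\gamma)t_{j-1} + h^{++}_{\ell}(\widehat{\Delta}_j)$ with $\widehat{\Delta}_j = A_j - (1-\gamma)A_{j-1}$; the paper instead rewrites \eqref{eq:trace_recur} as $\tr(A_j) = \gamma\,\tr(A_j) + (1-\gamma)\bigl(\tr(A_{j-1}) + \tr(\Delta_j)\bigr)$ and sets $t_j = \gamma\,h^{++}_{\ell}(A_j) + (1-\gamma)\bigl(t_{j-1} + h^{++}_{\ell}(\Delta_j)\bigr)$, i.e.\ two separate Hutch++ calls on the $\gamma$-independent matrices $A_j$ and $\Delta_j = A_j - A_{j-1}$. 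The inductive variance computation is then $\Var[t_j] \le \gamma^2\cdot 16/\ell^2 + (1-\gamma)^2\cdot 16\alpha^2/\ell^2 + (1-\gamma)^2\delta\epsilon^2$, which closes with the same $\ell = O(\sqrt{\alpha/\delta}/\epsilon)$.

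Both routes are valid for the theorem. The paper's reformulation exists for a practical reason you would not see from the analysis alone: unlike Hutchinson's, Hutch++ is \emph{not} linear in its input (its internal projection $P$ depends on the matrix), so the variance of $h^{++}_{\ell}(\widehat{\Delta}_j)$ cannot be written in closed form as a function of $\gamma$, which blocks the adaptive $\gamma$-selection strategy of Section~\ref{sec:opt_gamma}. By applying Hutch++ only to $A_j$ and $\Delta_j$, the paper can estimate each variance term separately and optimize $\gamma$ at runtime. Your version is arguably cleaner for the theoretical statement (one Hutch++ call per step instead of two) and your triangle-inequality bound $\|\widehat{\Delta}_j\|_* \le 2\alpha$ is exactly the right substitute for the Frobenius argument; it just sacrifices the parameter-free implementation.
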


Theorem \ref{thm:summary++} is stronger than Theorem \ref{thm:summary} for vanilla $\DS$ in that it has a linear instead of a quadratic dependence on $\epsilon$. In particular, its leading term scales as $\sqrt{\alpha/\epsilon^2}$, whereas Theorem \ref{thm:summary} scaled with $\alpha/\epsilon^2$. However, the result does require stronger assumptions on $A_1, \ldots, A_m$ and each $\Delta_i = A_{i+1}-A_i$ in that Problem \ref{prob:dyn-trace2} requires these matrices to have bounded nuclear norm instead of Frobenius norm. Since the nuclear norm of a matrix is strictly larger than its Frobenius norm, these requirements are stronger than those of Problem \ref{prob:dyn-trace}. As we will show in Section \ref{sec:experiments}, the benefit of improved $\epsilon$ dependence often outweights the cost of these more stringent assumptions.

\section{Experiments}
\label{sec:experiments}

We show that our proposed algorithm outperforms three alternatives on both synthetic and real-world trace estimation problems. Specifically, we evaluate the following methods:
\vspace{-.5em} 
\begin{description}[style=unboxed,leftmargin=0cm,itemsep=0ex]
	\item [Hutchinson.] The naive method of estimating each $\tr(A_1),\ldots, \tr(A_m)$ using an independent Hutchinson's estimator, as discussed in Section \ref{sec:prelims}.
	\item [$\NR$.] The estimator of \eqref{eq:naive_estimator}, which uses the same variance reduction strategy as $\DS$ for all $j \geq 2$, but does not restart or add damping to reduce error accumulation.
	\item [Restart.] The estimator discussed in Sec. \ref{sec:our_approach}, which periodically restarts the variance reduction strategy, using Hutchinson's to obtain a fresh estimate for $\tr(A_j)$. Pseudocode is in Appendix \ref{app:algos}.
	
	\item [$\DS$.] Our parameter free, damped variance reduction estimator detailed in Appendix \ref{app:algos}.
\end{description}
\vspace{-.5em} 

We allocated a fixed number of matrix-vector queries, $Q$, to be used over all time steps $1,\ldots, m$. For Hutchinson and $\DS$, the same number of vectors $Q/m$ was used at each step. For Restart and NoRestart, the distribution was non-uniform, and parameter selections are described in Appendix \ref{app:experiments}.


\begin{figure}[hbtp!]
\vspace{-1em}
    \centering

    \subfigure[Synthetic data with low perturbation]{\label{fig:restart-low}\includegraphics[width=0.44\textwidth]{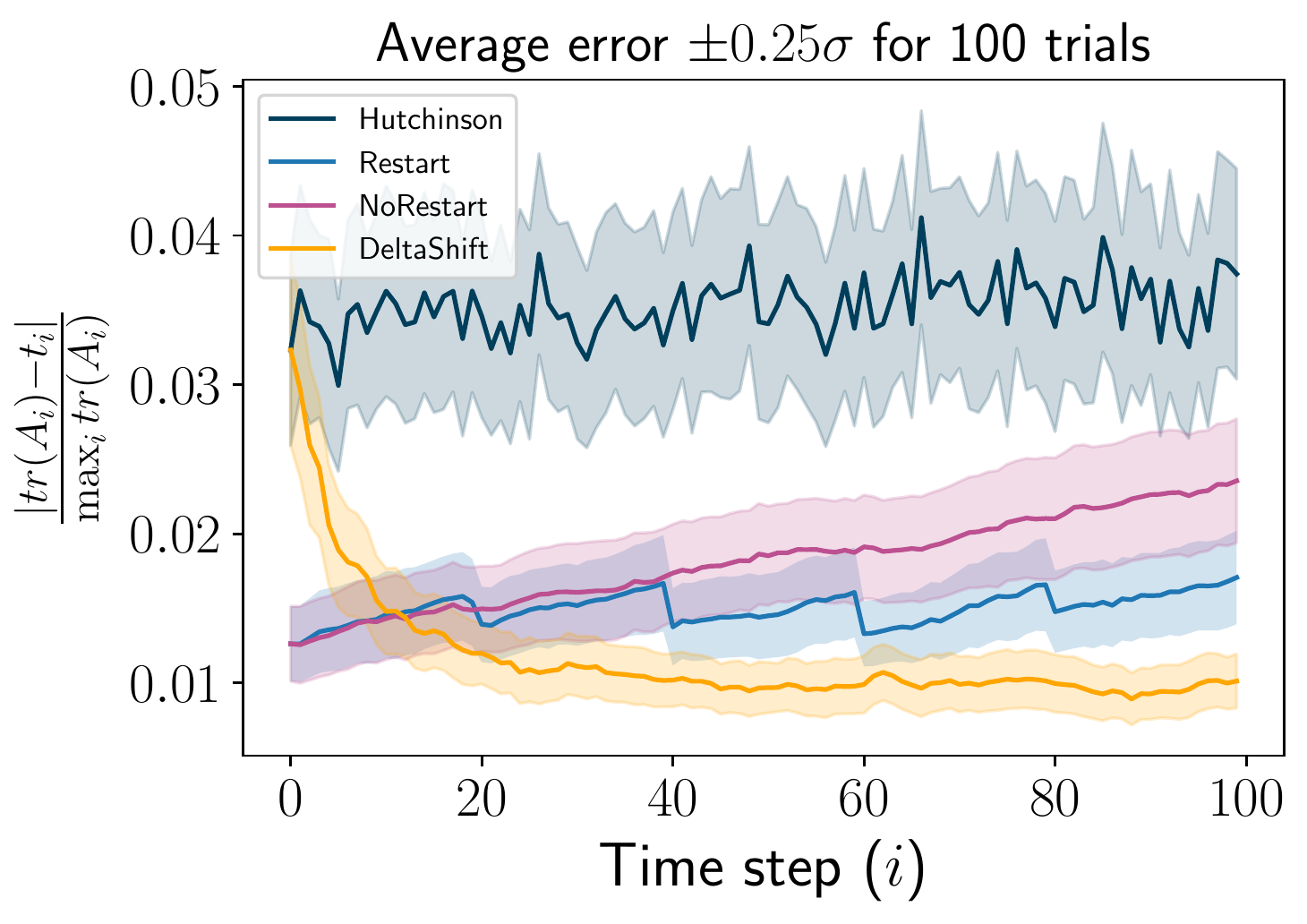}}
    \subfigure[Synthetic data with significant perturbation]{\label{fig:restart-high}\includegraphics[width=0.44\textwidth]{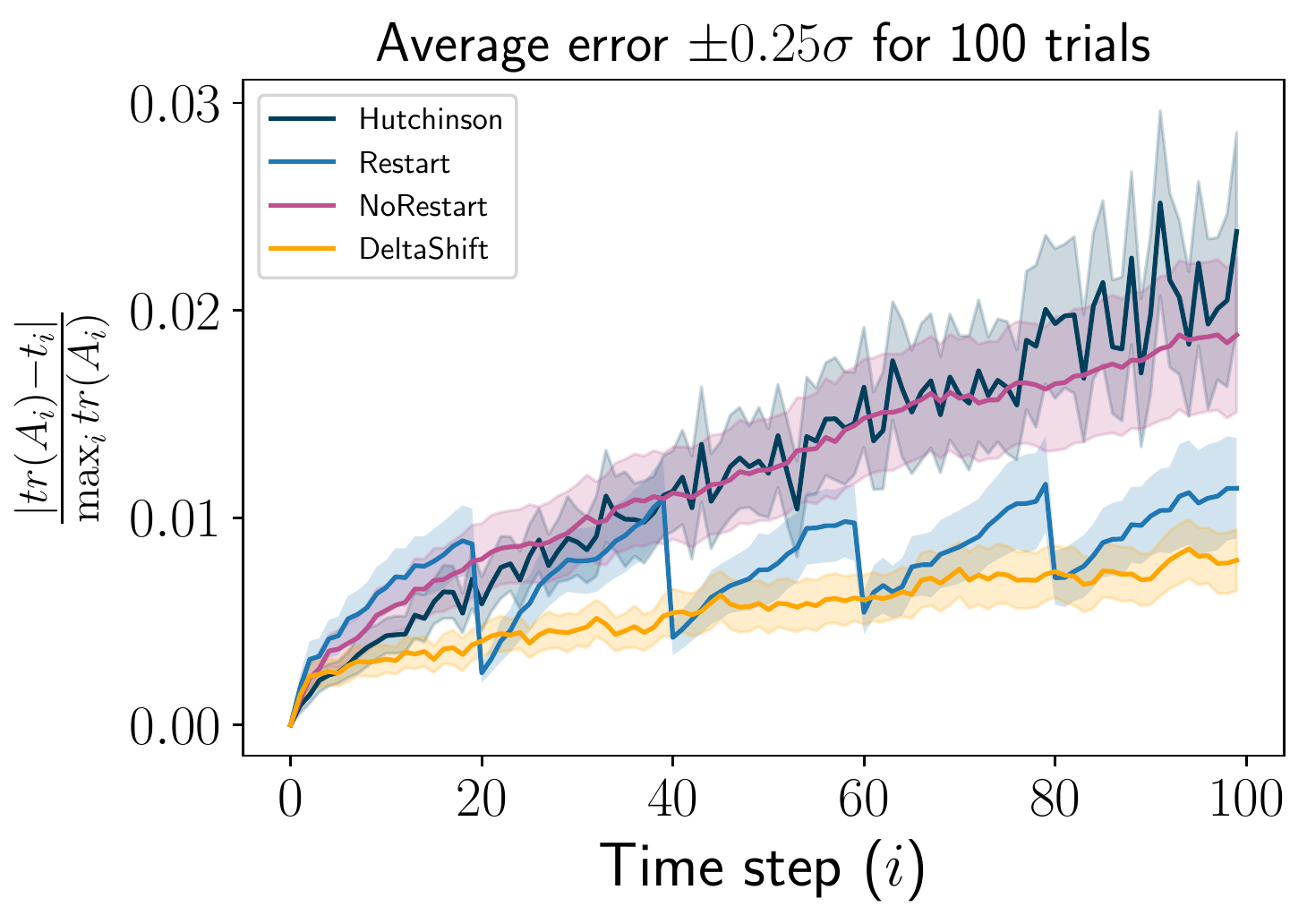}}
    
    \vspace{-.75em}
    \caption{Comparison of $\DS$ with Hutchinson and $\NR$ on synthetic data with $Q = 10^4$.}
    \label{fig:synthetic-perturb}
    \vspace{-.75em}
\end{figure}

\begin{figure*}[h]
\vspace{-1em}
\centering
\subfigure[Synthetic data with $Q = 2*10^3$]{\label{fig:syn-400}\includegraphics[width=0.44\textwidth]{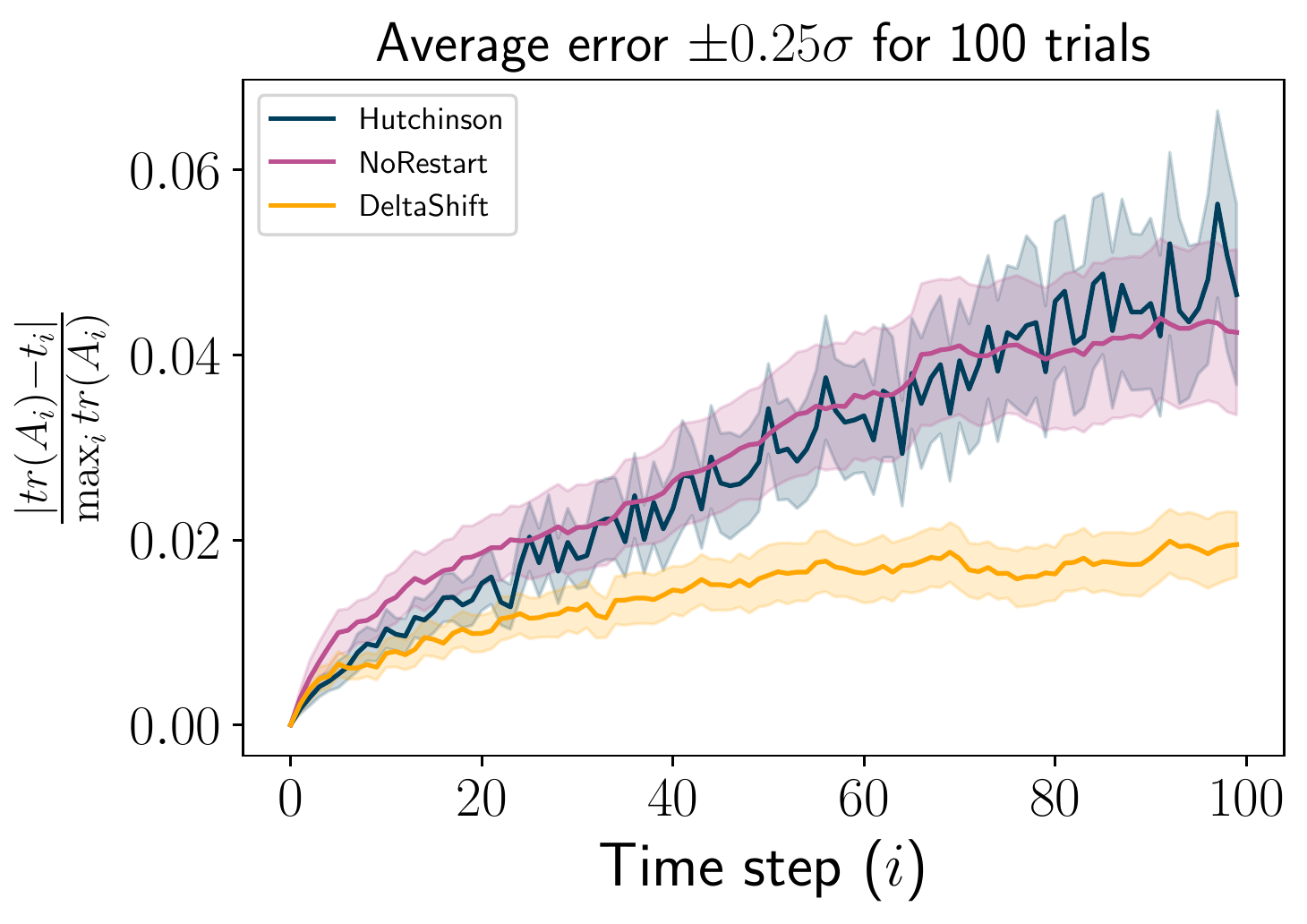}}
\subfigure[Synthetic data with $Q = 8*10^3$]{\label{fig:syn-1600}\includegraphics[width=0.44\textwidth]{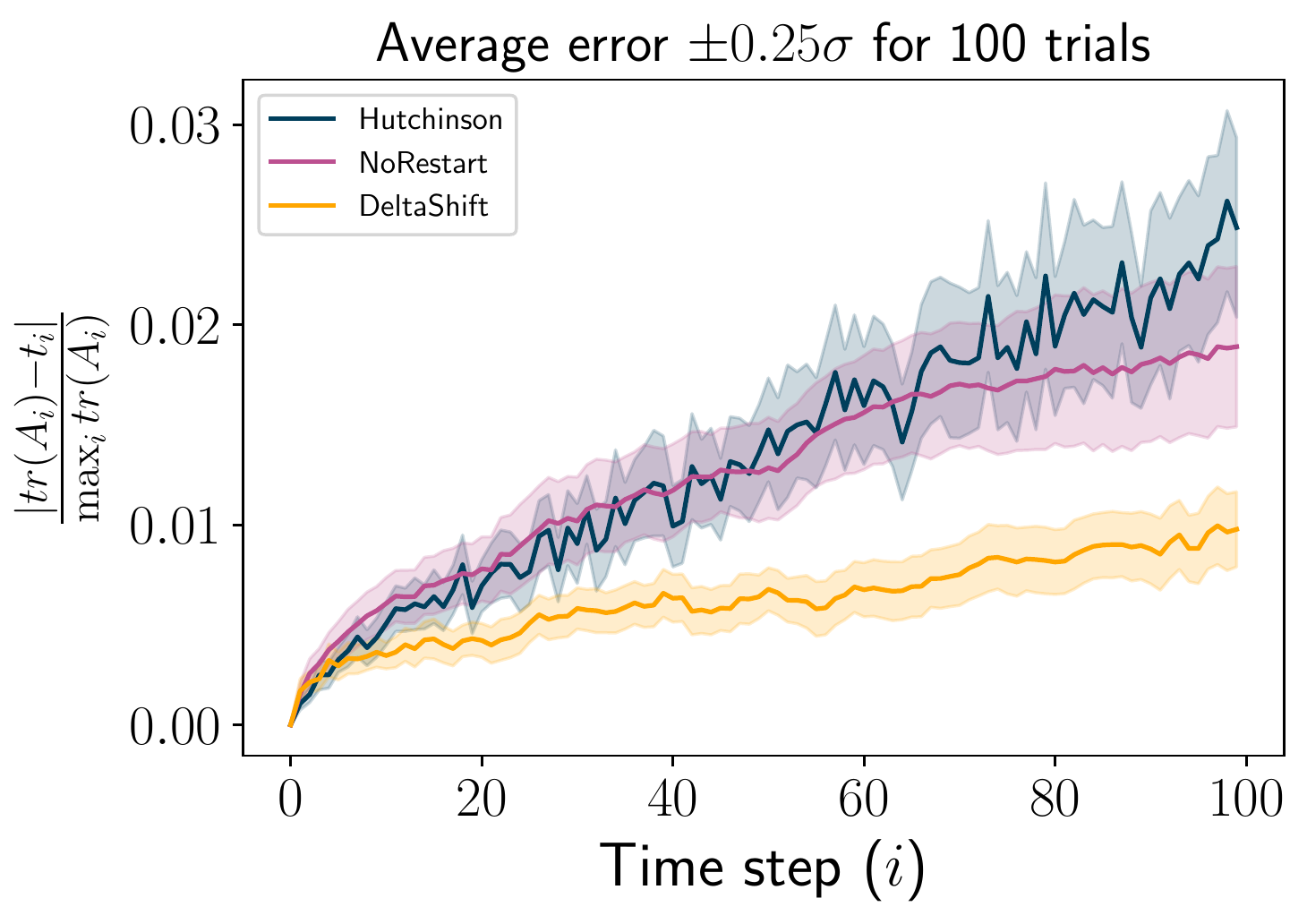}}

\vspace{-.75em}
\caption{Comparison of $\DS$ with Hutchinson and $\NR$ on synthetic data.}
\label{fig:synthetic}
   \vspace{-.75em}
\end{figure*}

\textbf{Synthetic data:} To simulate the dynamic setting, we generate a random matrix $A \in \R^{n \times n}$ and add random perturbations for each of 100 time steps. We consider two cases: low (Fig. \ref{fig:restart-low}) and significant (Fig. \ref{fig:restart-high}) perturbations, the exact details of which, as well as the allocation of matrix-vector products for $\NR$ and Restart, are discussed in Appendix \ref{app:experiments}. 
We report scaled absolute error between the estimator at time, $t_j$, and the true trace $\tr(A_j)$. 
As expected, Hutchinson is outperformed even by $\NR$ when perturbations are small. $\DS$ performs best, and its error actually improve slightly over time. $\DS$ also performs best for the large perturbation experiment. We note that choosing the multiple parameters for the Restart method was a challenge in comparison to $\DS$. Tuning the method becomes infeasible for larger experiments, so we exclude this method in our other experiments. That includes for the plots in Fig. \ref{fig:synthetic}, which show that $\DS$ continues to outperform Hutchinson and $\NR$ for lower values of $Q$. 

\begin{figure*}[h]
\vspace{-1em}
\centering

\subfigure[Graph data with $Q = 2*10^3$]{\label{fig:graph-400}\includegraphics[width=0.44\textwidth]{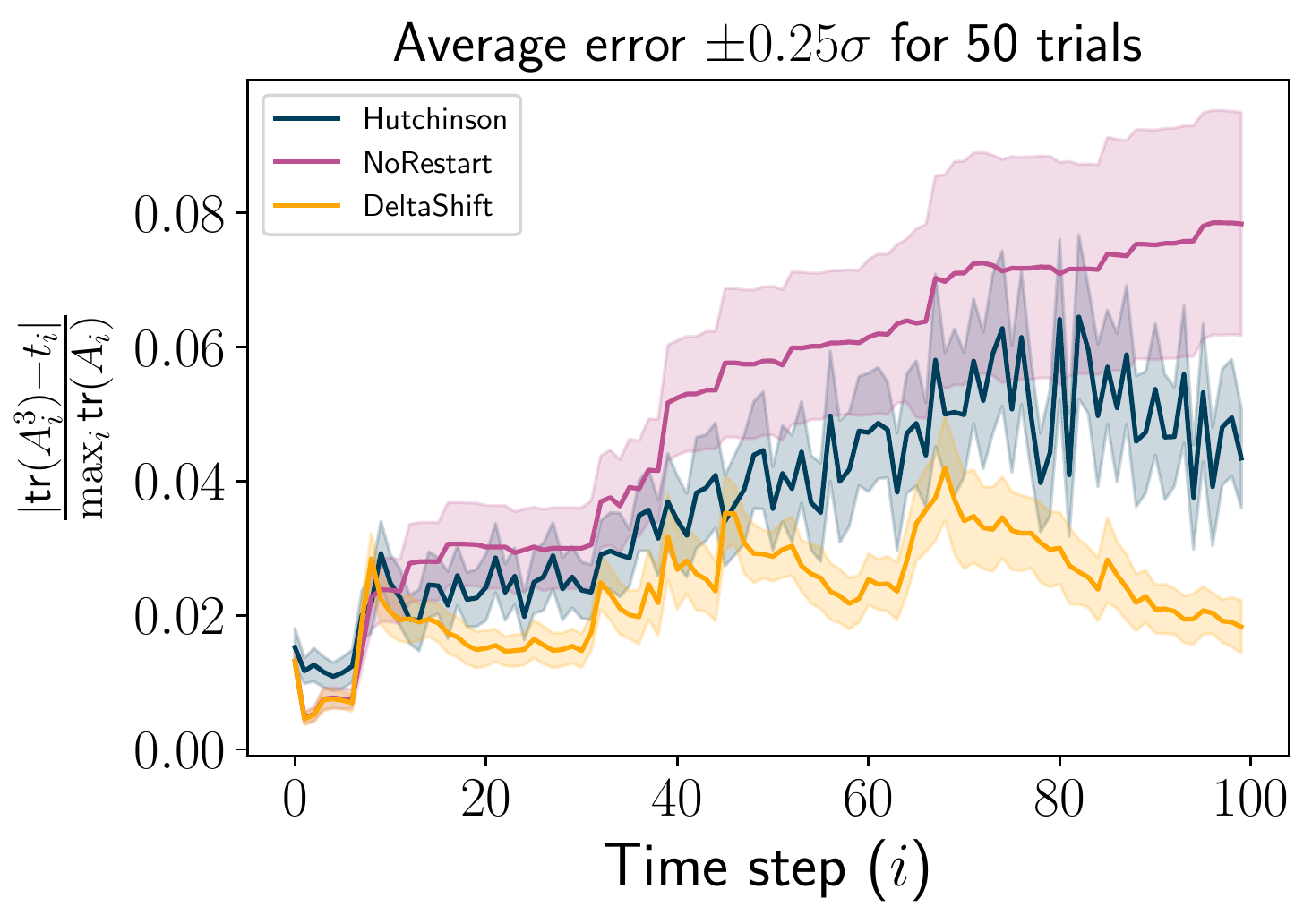}}
\subfigure[Graph data with $Q = 10^4$]{\label{fig:graph-2000}\includegraphics[width=0.44\textwidth]{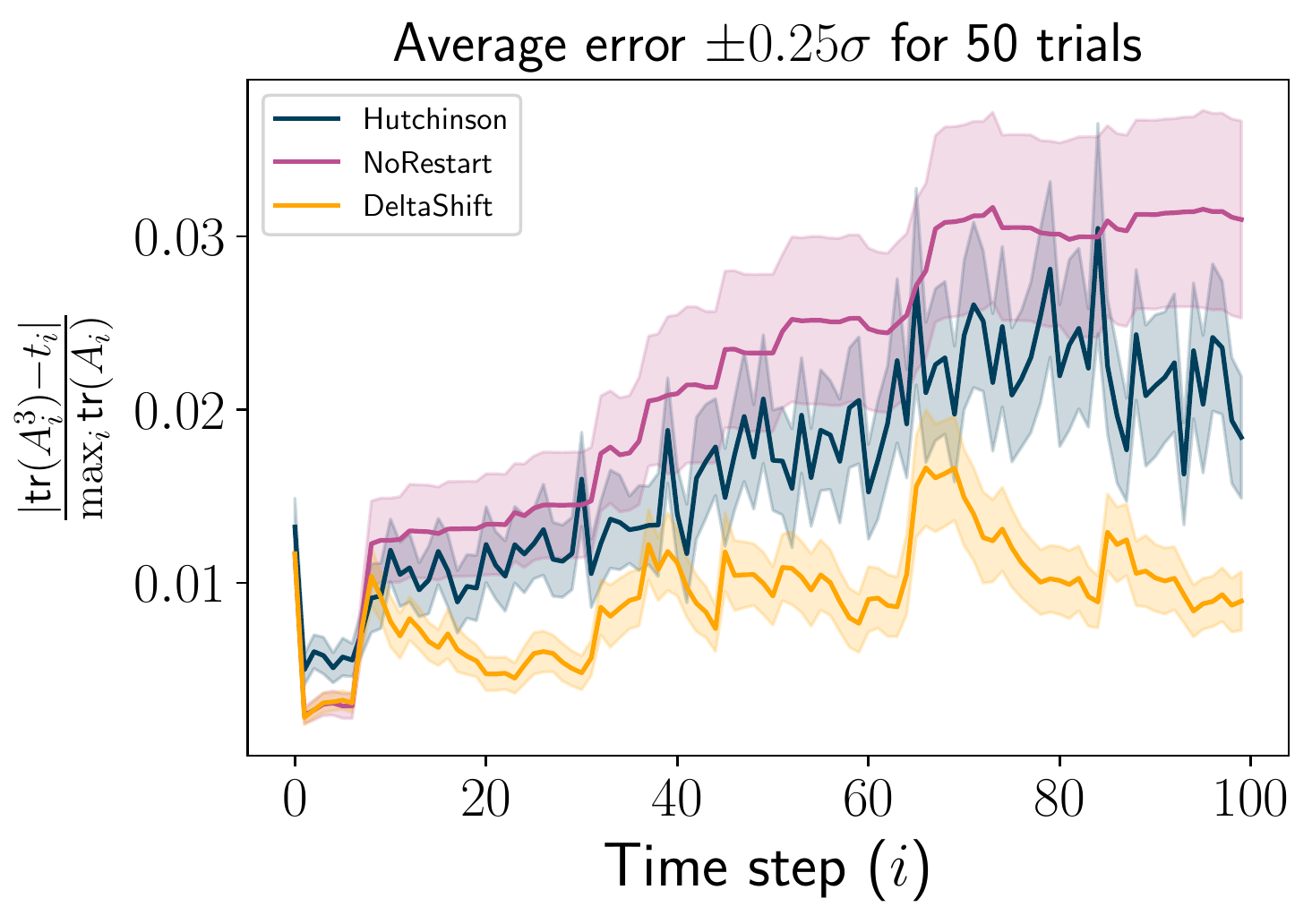}}

\vspace{-0.75em}
\caption{Comparison of $\DS$ with Hutchinson and $\NR$ for triangle counting experiment.}
\label{fig:graph}
    \vspace{-.5em}
\end{figure*}

\textbf{Counting triangles:} Our first real-data experiment is on counting triangles in a dynamic unweighted, undirected graph $G$ via the fact that the number of triangles equals $\frac{1}{6}\tr(B^3)$, where $B$ is the adjacency matrix. The graph dataset we use is the Wikipedia vote network dataset with 7115 nodes \cite{wiki-data, wiki-data-2}. At each timestep we perturb the graph by adding a random $k$-clique, for $k$ chosen uniformly between 10 and 150. After 75 time steps, we start randomly deleting among the subgraphs added. 
We follow the same setup for number of matrix-vector products used by the estimators and the error reported as in the synthetic experiments(Appendix \ref{app:experiments}). Note that for this particular application, the actual number of matrix-vector multiplications with $B$ is $3Q$, since each oracle call computes $B(B(Bx))$. As seen in Fig. \ref{fig:graph}, $\DS$ provides the best estimates overall.

\begin{figure}
\vspace{-1em}
\centering
\begin{minipage}{.48\textwidth}
  \centering
  \includegraphics[width=.9\linewidth]{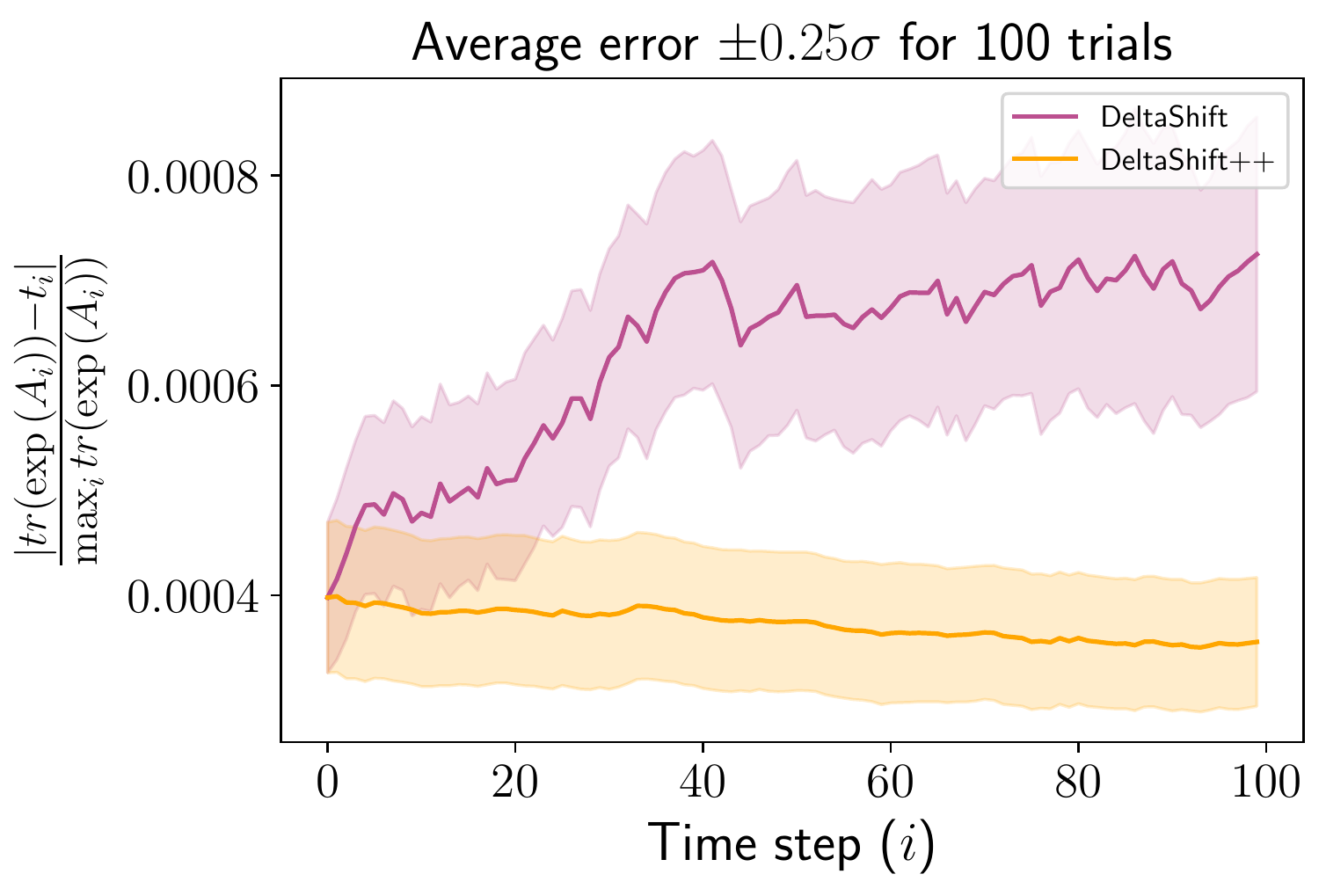}
  
  \vspace{-.75em}
  \caption{Error of $\DS$ and $\DSPP$ for dynamic estimation of natural connectivity.}
  \label{fig:psd_seq}
\end{minipage}%
\hfill
\begin{minipage}{.48\textwidth}
  \centering
  \includegraphics[width=.9\linewidth]{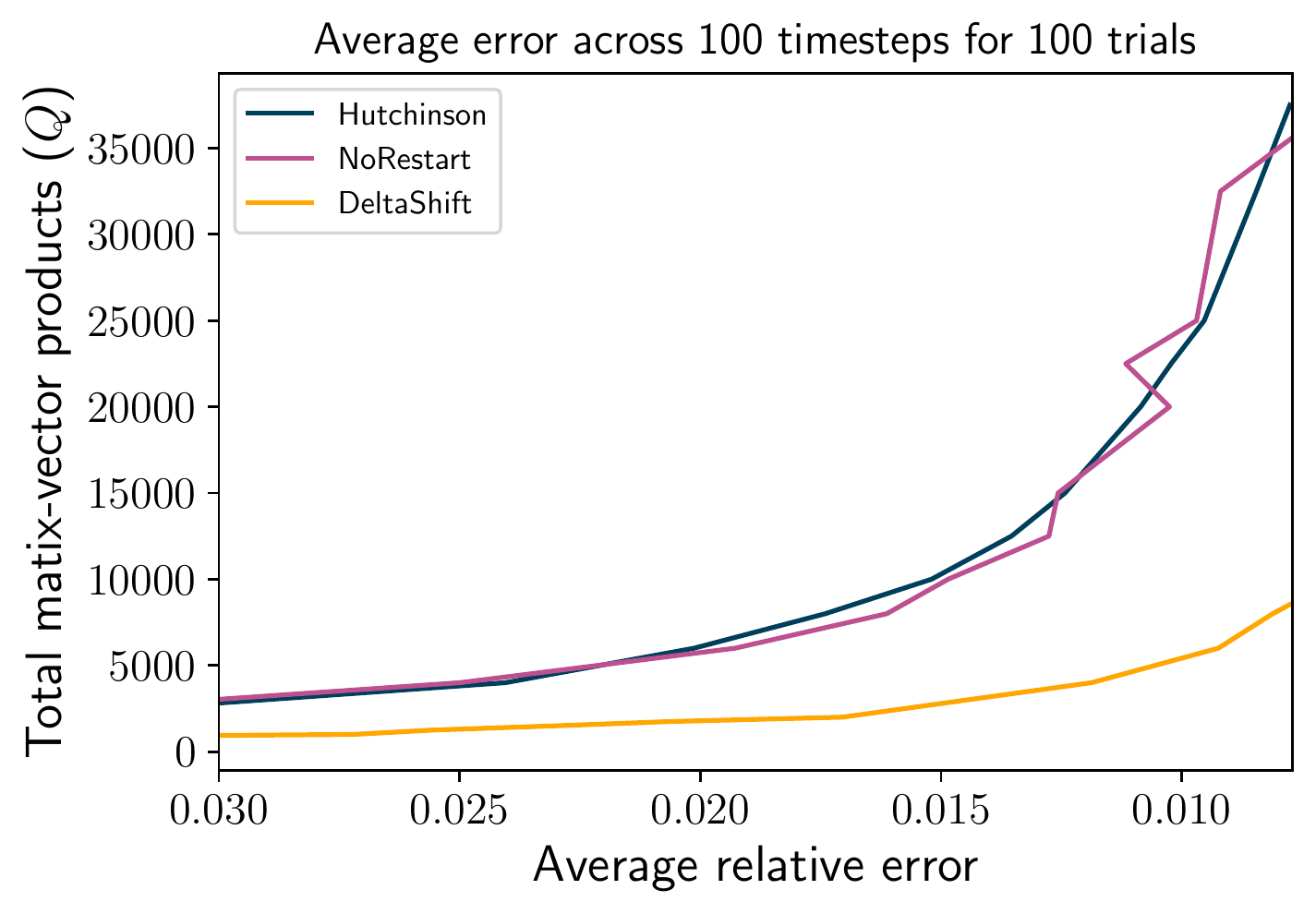}
  
  \vspace{-.75em}
  \caption{Average error vs. computational cost for synthetic data with large perturbations.}
  \label{fig:err_cost_syn}
\end{minipage}
\vspace{-.75em}
\end{figure}

\textbf{Estimation natural connectivity:} To evaluate the $\DSPP$ algorithm introduced in Section \ref{sec:other}, we address an application in \cite{WangSunMusco:2020} on estimating natural connectivity in a dynamic graph, which is a function of $\tr(\exp(B))$ for adjacency matrix $B$. This problem has also been explored in \cite{chan2014make, chen2018network, beckermann2018low}. We use the road network data Gleich/minnesota (available at \url{https://sparse.tamu.edu/Gleich/minnesota}).
We perturb the graph over time by choosing two nodes at random and adding an edge between them, and 
use the Lanczos method to approximate matrix-vector products with $\exp(B)$. We find that $\DSPP$ performs better than $\DS$  (Fig. \ref{fig:psd_seq}), as the change in $\exp(B)$ tends to be nearly low-rank, and thus have small nuclear norm (see \cite{beckermann2018low} for details). 
Both $\DS$ and $\DSPP$ perform significantly better than naive Hutchinson's when 100 matrix-vector products are used per time step.

The key takeaway from the experiments above is that $\DS$ and $\DSPP$ are able to obtain good dynamic trace approximations in far fewer matrix-vector products compared to Hutchinson's and other methods, resulting in considerable computational savings. This is made evident in Figure \ref{fig:err_cost_syn}, which plots average relative error across all time steps vs. total number of matrix-vector products($Q$), for various values of $Q$. In order to achieve the accuracy level as $\DS$, Hutchinson's requires substantially more matrix-vector products.




\begin{table}[b]
\vspace{-2em}

\parbox{.48\linewidth}{
\centering
\caption{Average relative error for trace of Chebyshev polynomials of Hessian.}
\label{tab:low-lr}
\begin{tabular}{lccc}
\toprule
 & Hutchinson & \NR & \DS \\
\midrule
$T_1(H)$    & 2.5e-02 & 3.7e-02 & 1.7e-02 \\
$T_2(H)$    & 1.2e-06 & 1.7e-06 & 8.0e-07 \\
$T_3(H)$    & 4.0e-02 & 4.1e-02 & 3.1e-02 \\
$T_4(H)$    & 1.5e-06 & 1.7e-06 & 1.0e-06 \\
$T_5(H)$    & 2.1e-02 & 4.3e-02 & 1.9e-02\\
\bottomrule
\end{tabular}
}
\hfill
\parbox{.48\linewidth}{
\centering
\caption{Average relative error for trace of Chebyshev polynomials of Hessian.}
\label{tab:high-lr}
\begin{tabular}{lccc}
\toprule
& Hutchinson & \NR & \DS \\
\midrule
$T_1(H)$    & 1.9e-02 & 5.0e-02 & 1.5e-02 \\
$T_2(H)$    & 1.2e-06 & 2.9e-06 & 9.9e-07 \\
$T_3(H)$    & 7.7e-02 & 9.4e-02 & 6.1e-02 \\
$T_4(H)$    & 1.7e-06 & 2.8e-06 & 1.5e-06 \\
$T_5(H)$    & 2.1e-02 & 4.2e-02 & 1.8e-02\\
\bottomrule
\end{tabular}
}
\end{table}

\textbf{Hessian spectral density:} Finally, we evaluate the performance of $\DS$ on  an application pertaining to a dynamically changing Hessian matrix, $H$, involved in training a neural network. As discussed in Section \ref{sec:intro}, a common goal is to approximate the \emph{spectral density} of $H$. Most methods for doing so, like the popular Kernel Polynomial Method  \cite{weisse2006kernel}, require computing the trace of polynomials of the matrix $H$. We consider the sequence of Chebyshev polynomials $T_0, \ldots, T_q$, and estimate $\tr(T_0(H)), \ldots, \tr(T_q(H))$. Other polynomial basis sets can also be used (e.g., Legendre polynomials). 
Experimental details are discussed in section Appendix \ref{app:experiments}, but we summarized the results here. We implement matrix vector products with $H$ using the PyHessian library \cite{yao2019pyhessian}, and report average error over 25 training epochs for the Hessian of a ResNet model with 269722 parameters trained it on the CIFAR-10 dataset. As it is impossible to compute the true trace of these matrices, we use Hutchinson's estimator with a greater number of queries as placeholder for ground-truth, and compare the performance against the computed values.
%
%
%
As can be seen in Tables \ref{tab:low-lr} and \ref{tab:high-lr}, $\DS$ obtains uniformly better approximation to the trace values, although the improvement is small. This makes sense, as more progress on each training epoch implies a greater change in the Hessian over time, meaning $\alpha$ is larger and thus $\DS$'s advantage over Hutchinson's is smaller. 

\section*{Acknowledgements}
We would like to think Cameron Musco for helpful discussions, as well as the paper referees for detailed feedback. This work was supported by NSF Award \#2045590.

%
%

\nocite{he2016deep, krizhevsky2009learning}
\bibliography{references}
\bibliographystyle{plain}

\newpage
\appendix

\section{Algorithms}
\label{app:algos}

Below we include detailed pseudocode for algorithms described in the main text. 

\begin{algorithm}[!htbp]
	\caption{{Parameter Free $\DS$}}
	\label{alg:deltashift_paramfree}
	{\bfseries Input}: Implicit matrix-vector multiplication access to $A_1, ..., A_m \in \mathbb{R}^{n \times n}$, positive integer $\ell$.  \\
	{\bfseries Output}: $t_1, \ldots, t_m$ approximating $\tr(A_1), \ldots, \tr(A_m)$.
	
	\begin{algorithmic}
		\STATE Draw $\ell$ random $\pm 1$ vectors $g_1, \ldots, g_{\ell} \in \R^n$\\
		\STATE $z_1 \gets A_1g_1, \ldots, z_\ell \gets A_1g_\ell$ 
		\STATE $N \gets \frac{1}{\ell}\sum_{i=1}^{\ell} z_i^T z_i$ \hfill(estimate for $\|A_1\|_F^2$) 
		\STATE Initialize $t_1 \gets \frac{1}{\ell}\sum_{i=1}^{\ell} g_i^T z_i$ and $v_1 \gets \frac{2}{\ell}N$\\
		\FOR{$j\leftarrow 2$ {\bfseries to} $m$}
		\STATE Draw $\ell$ random $\pm 1$ vectors $g_1, \ldots, g_{\ell} \in \R^n$\\
		\STATE $z_1 \gets A_{j-1}g_1, \ldots, z_\ell \gets A_{j-1}g_\ell, \hspace{2.5pt} w_1 \gets A_jg_1, \ldots, w_\ell \gets A_jg_\ell$  \\
		\STATE $N \gets \frac{1}{\ell}\sum_{i=1}^{\ell} z_i^T z_i, \hspace{2.5pt} M \gets \frac{1}{\ell}\sum_{i=1}^{\ell} w_i^T w_i, \hspace{2.5pt} C \gets \frac{1}{\ell}\sum_{i=1}^{\ell} w_i^T z_i$ \hfill\big(estimate for $\tr(A_{j-1}^TA_{j-1})$,  \\ \hfill$\tr(A_{j}^TA_{j}) \& \tr(A_{j-1}^TA_j)$\big) \\
		\STATE $\gamma \gets 1 - \frac{2C}{\ell \tilde{v}_{j-1} + 2N}$ \hfill (optimal damping factor)\\
		\STATE $t_j \gets (1-\gamma)t_{j-1} + \frac{1}{\ell} \sum_{i=1}^{\ell} g_i^T \left(w_i - (1-\gamma)z_i\right) $
		\STATE $v_j \gets (1-\gamma)^2v_{j-1} + \frac{2}{\ell}\left(N + (1-\gamma)^2M - 2(1-\gamma)C\right)$
		\ENDFOR
	\end{algorithmic}
\end{algorithm}

\begin{algorithm}[!htbp]
	\caption{Dynamic Trace Estimation w/ Restarts}
	\label{alg:restarts}
	{\bfseries Input}: Implicit matrix-vector multiplication access to $A_1, ..., A_m \in \mathbb{R}^{n \times n}$, positive integers $\ell_0,\ell,q\leq m$.  \\
	{\bfseries Output}: $t_1, \ldots, t_m$ approximating $\tr(A_1), \ldots, \tr(A_m)$.
	
	\begin{algorithmic}
		\STATE Draw $\ell_0$ random $\pm 1$ vectors $g_1, \ldots, g_{\ell_0} \in \R^n$\\
		\STATE Initialize $t_1 \gets \frac{1}{\ell_0}\sum_{i=1}^{\ell_0} g_i^T A_1 g_i$ \\
		\FOR{$j \gets 2$ {\bfseries to} $m$}
		\IF{$j \equiv 1 \pmod{n}$}
		\STATE Draw $\ell_0$ random $\pm 1$ vectors $g_1, \ldots, g_{\ell_0} \in \R^n$\\
		\STATE $t_j \gets \frac{1}{\ell_0}\sum_{i=1}^{\ell_0} g_i^T A_j g_i$
		\ELSE
		\STATE Draw $\ell_0$ random $\pm 1$ vectors $g_1, \ldots, g_{\ell} \in \R^n$\\
		\STATE $t_j \gets t_{j-1} + \frac{1}{\ell}\sum_{i=1}^{\ell} g_i^T (A_j - A_{j-1}) g_i$
		\ENDIF
		\ENDFOR
	\end{algorithmic}
\end{algorithm}

\section{High Probability Proofs}
\label{sec:subexp_proofs}
In this section, we give a full proof of Theorem \ref{thm:summary} with the correct logarithmic dependence on $1/\delta$. Before doing so, we collect several definitions and results required for proving the theorem.

	
	




\begin{definition}\cite{wainwright_2019}
	\label{def:sub-exp-mgf}
A random variable $\mathrm{X}$ with $\E[\mathrm{X}] = \mu$ is \emph{sub-exponential} with parameters $(\nu, \beta)$ if its moment generating function  satisfies:
	\begin{align*}
		\E[e^{\lambda(\mathrm{X}-\mu)}] \leq e^{\frac{\nu^2\lambda^2}{2}} \quad \text{for all} \hspace{5pt} |\lambda| < \frac{1}{\beta}.
	\end{align*}
\end{definition}
\begin{claim}\cite{wainwright_2019}
	\label{clm:sub-exp-tail}
	Any sub-exponential random variable with parameters $(\nu, \beta)$ satisfies the tail bound
	\begin{align*}
	    \label{def:sub-exp-tail}
		\Pr\left[|\mathrm{X} - \mu| \geq t\right] \leq
		\begin{cases}
			2e^{\frac{-t^2}{2\nu^2}} \quad \text{if} \hspace{5pt} 0 \leq t \leq \frac{\nu^2}{\beta} \\
			2e^{\frac{-t}{2\beta}} \quad \text{for} \hspace{5pt} t > \frac{\nu^2}{\beta}.
		\end{cases}
	\end{align*}
\end{claim}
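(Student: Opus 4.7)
The plan is to prove the two-regime tail bound for a sub-exponential random variable by a standard Chernoff argument: apply Markov's inequality to $e^{\lambda(\mathrm{X}-\mu)}$, plug in the MGF bound from Definition \ref{def:sub-exp-mgf}, and then optimize $\lambda$ subject to the constraint $|\lambda| < 1/\beta$ that comes with sub-exponentiality. The two cases in the statement correspond exactly to whether the unconstrained optimum $\lambda^\star = t/\nu^2$ falls inside or outside the allowed interval $[0,1/\beta)$, which happens precisely when $t \leq \nu^2/\beta$ or $t > \nu^2/\beta$.

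Concretely, I would first bound the one-sided upper tail. For any $\lambda \in [0, 1/\beta)$, Markov's inequality gives
\[
\Pr[\mathrm{X} - \mu \geq t] = \Pr\bigl[e^{\lambda(\mathrm{X}-\mu)} \geq e^{\lambda t}\bigr] \leq e^{-\lambda t}\,\E\bigl[e^{\lambda(\mathrm{X}-\mu)}\bigr] \leq \exp\!\Bigl(\tfrac{\nu^2\lambda^2}{2} - \lambda t\Bigr).
\]
In the small-$t$ regime $0 \leq t \leq \nu^2/\beta$, the minimizer $\lambda^\star = t/\nu^2$ lies in the admissible interval, so substituting it in yields the exponent $-t^2/(2\nu^2)$. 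In the large-$t$ regime $t > \nu^2/\beta$, the quadratic is decreasing on $[0,1/\beta)$, so I push $\lambda$ to the boundary value $1/\beta$, giving exponent $\nu^2/(2\beta^2) - t/\beta$; since $t > \nu^2/\beta$ implies $\nu^2/(2\beta^2) \leq t/(2\beta)$, this is at most $-t/(2\beta)$, matching the stated bound.

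Next I would handle the lower tail $\Pr[\mu - \mathrm{X} \geq t]$ by the identical argument applied to $-\mathrm{X}$, noting that Definition \ref{def:sub-exp-mgf} is symmetric in $\lambda$ (it only requires $|\lambda| < 1/\beta$), so $-(\mathrm{X}-\mu)$ satisfies the same MGF bound. A union bound over the two one-sided events then multiplies each case by $2$, yielding the claimed two-case tail inequality for $|\mathrm{X}-\mu|$.

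There is no real obstacle here; the only point that requires any care is the case split at the threshold $t = \nu^2/\beta$ and verifying that the boundary substitution $\lambda = 1/\beta$ gives the clean factor of $1/(2\beta)$ in the exponent via the inequality $\nu^2/(2\beta^2) \leq t/(2\beta)$. I would state the proof cleanly as: (i) Chernoff bound, (ii) MGF substitution, (iii) two-case optimization over $\lambda$, (iv) symmetrization and union bound, producing exactly the piecewise bound in Claim \ref{clm:sub-exp-tail}.
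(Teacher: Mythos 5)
Your proposal is correct and is precisely the standard Chernoff-bound derivation (Markov on $e^{\lambda(\mathrm{X}-\mu)}$, substitute the sub-exponential MGF bound, optimize $\lambda$ over $[0,1/\beta)$ with a two-case split at $t=\nu^2/\beta$, then symmetrize). The paper does not give its own proof of this claim — it is cited directly from Wainwright (2019) — and your argument matches that source's proof; the only cosmetic point worth noting is that at the boundary $t=\nu^2/\beta$ the unconstrained minimizer $\lambda^\star=1/\beta$ sits on the edge of the open interval, which is handled by a trivial limiting argument and in any case the two displayed exponents coincide there.
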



\begin{claim}
	\label{lem-sub-exp-lin}
	Let $\mathrm{X}_1, \mathrm{X}_2, \ldots, \mathrm{X}_k$ be independent random variables with mean $\mu_1, ..., \mu_k$ and sub-exponential parameters $(\nu_1, \beta_1), ... (\nu_k, \beta_k)$, then $\sum_{i=1}^k a_i\mathrm{X}_i$ is sub-exponential with parameters $(\nu_*, \beta_*)$ where, 
	
	\begin{equation*}
	    \nu_* = \sqrt{\sum_{i=1}^k a_i^2\nu_i^2} \quad \text{and} \quad \beta_* = \max_{i=1,...,k} a_i\beta_i
	\end{equation*}
	
\end{claim}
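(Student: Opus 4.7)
The plan is to verify the sub-exponential definition (Definition \ref{def:sub-exp-mgf}) directly for the linear combination $Y = \sum_{i=1}^{k} a_i X_i$ by computing its centered moment generating function and finding the parameters $(\nu_*, \beta_*)$ for which the required bound holds. The mean of $Y$ is $\mu = \sum_{i=1}^k a_i \mu_i$ by linearity of expectation, and $Y - \mu = \sum_{i=1}^k a_i (X_i - \mu_i)$.

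First I would exploit independence: since the $X_i$ are independent, so are the shifted scaled variables $a_i(X_i - \mu_i)$, and hence
\begin{align*}
\E\bigl[e^{\lambda(Y-\mu)}\bigr] = \prod_{i=1}^{k} \E\bigl[e^{\lambda a_i (X_i - \mu_i)}\bigr].
\end{align*}
Next I would apply the sub-exponential bound on each factor. By Definition \ref{def:sub-exp-mgf} applied to $X_i$ at parameter $\lambda a_i$, we have $\E[e^{\lambda a_i (X_i - \mu_i)}] \leq e^{\nu_i^2 (\lambda a_i)^2 / 2}$ provided $|\lambda a_i| < 1/\beta_i$, i.e.\ $|\lambda| < 1/(a_i \beta_i)$. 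For all $k$ factor bounds to apply simultaneously, we need $|\lambda| < \min_i \frac{1}{a_i \beta_i} = \frac{1}{\max_i a_i \beta_i} = \frac{1}{\beta_*}$.

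When $|\lambda| < 1/\beta_*$, multiplying the per-factor bounds gives
\begin{align*}
\E\bigl[e^{\lambda(Y-\mu)}\bigr] \leq \prod_{i=1}^{k} e^{\nu_i^2 a_i^2 \lambda^2 / 2} = \exp\Bigl(\tfrac{\lambda^2}{2}\sum_{i=1}^k a_i^2 \nu_i^2\Bigr) = e^{\nu_*^2 \lambda^2 / 2},
\end{align*}
which is exactly the condition for $Y$ to be sub-exponential with parameters $(\nu_*, \beta_*)$. The argument is essentially an MGF calculation with two moving pieces: the variance-like parameters $\nu_i$ add in quadrature (thanks to independence), while the tail-decay parameters $\beta_i$ combine via a maximum (since the MGF bound only holds in the intersection of the per-variable admissible $\lambda$-ranges). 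The only step that requires a bit of care is this intersection argument for the $\beta_*$ bound; everything else is routine manipulation.
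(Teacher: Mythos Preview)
Your proposal is correct and follows essentially the same approach as the paper: factor the centered MGF by independence, apply the per-variable sub-exponential bound at parameter $\lambda a_i$, intersect the admissible $\lambda$-ranges to get $|\lambda| < 1/\beta_*$, and multiply the resulting Gaussian-type bounds so that the $a_i^2\nu_i^2$ add. The paper's proof is just a terser rendition of exactly this computation.
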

\begin{proof} The proof is straight-forward by computing the moment generating function of $\sum_{i=1}^k a_i \mathrm{X}_i$ using the independence of $\mathrm{X}_1, \mathrm{X}_2, \ldots, \mathrm{X}_k$. Specifically, for $|\lambda| < 1/(\max_{i=1,...,k} a_i\beta_i)$ we have:
\begin{align*}
\mathbb{E}[e^{\lambda \sum_{i=1}^k a_i (\mathrm{X}_i - \mu_i)}] = \prod_{i=1}^k \mathbb{E}[e^{\lambda a_i(\mathrm{X}_i - \mu_i)}] \\ \leq \prod_{i=1}^k e^{\frac{\lambda^2 a_i^2 \nu_i^2}{2}} = e^{\frac{\lambda^2 \nu_*^2}{2}}.
\end{align*}
\end{proof}

	

As discussed, a tight analysis of Hutchinson's estimator, and also our DeltaShift algorithm, relies on the \emph{Hanson-Wright inequality} \cite{HansonWright:1971}, which shows that any quadratic form involving a vector with i.i.d. sub-Gaussian entries is a sub-exponential random variable. Thanks to existing concentration results for sub-exponential's this allows us to obtain a better dependence on the failure probability than given by the cruder Chebyshev's inequality presented in the paper's main text. 
Specifically, we use the following version of the inequality:


\begin{claim}
	\label{lem-hansonwright}[Corollary of Theorem 1.1, \cite{rudelson2013hanson}]
	For $A\in \R^{n\times n}$, let $h_\ell(A)$ be Hutchinson's estimator as defined in Section \ref{sec:prelims}, implemented with Rademacher random vectors. $h_\ell(A)$  is a sub-exponential random variable with parameters 
		\begin{equation*}
			\nu = \frac{c_1\|A\|_F}{\sqrt{\ell}} \quad \text{and} \quad \beta = \frac{c_2\|A\|_2}{\ell},
		\end{equation*}
	where $c_1, c_2$ are absolute constants.
\end{claim}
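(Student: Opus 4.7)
The plan is to use the Hanson–Wright inequality of \cite{rudelson2013hanson} to control each individual quadratic form $g_i^T A g_i$, and then lift the resulting single–sample sub–exponential bound to the averaged estimator $h_\ell(A) = \frac{1}{\ell}\sum_{i=1}^\ell g_i^T A g_i$ via the linearity statement of Claim \ref{lem-sub-exp-lin}. The key observation is that the $g_i$ are i.i.d. Rademacher vectors, which are sub–Gaussian with an absolute constant sub–Gaussian norm, so Theorem 1.1 of \cite{rudelson2013hanson} applies with universal constants.

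First, I would invoke the Hanson–Wright inequality on a single term. For a fixed $i$, the vector $g_i$ has i.i.d. Rademacher entries and $\E[g_i^T A g_i] = \tr(A)$. Theorem 1.1 of \cite{rudelson2013hanson} then yields a two–sided tail bound of the form
\begin{equation*}
\Pr\bigl[\, |g_i^T A g_i - \tr(A)| \geq t \,\bigr]
\;\leq\; 2\exp\!\left(-c\,\min\!\left(\frac{t^2}{\|A\|_F^2},\, \frac{t}{\|A\|_2}\right)\right)
\end{equation*}
for a universal constant $c>0$. This is precisely the tail bound of a sub–exponential random variable: matching terms against Claim \ref{clm:sub-exp-tail}, one sees that $g_i^T A g_i$ satisfies Definition \ref{def:sub-exp-mgf} with parameters $(\tilde\nu,\tilde\beta) = (c_1\|A\|_F,\, c_2\|A\|_2)$ for absolute constants $c_1,c_2$. (The equivalence between the tail characterization and the moment generating function characterization of sub–exponentiality, up to absolute constants, is standard; see e.g.\ \cite{wainwright_2019}.)

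Next, I would apply Claim \ref{lem-sub-exp-lin} to the independent random variables $\mathrm{X}_i = g_i^T A g_i$ with weights $a_i = 1/\ell$. Each $\mathrm{X}_i$ has mean $\tr(A)$ and sub–exponential parameters $(c_1\|A\|_F,\, c_2\|A\|_2)$, and the $g_i$ are drawn independently. Plugging into the formulas of Claim \ref{lem-sub-exp-lin},
\begin{equation*}
\nu \;=\; \sqrt{\sum_{i=1}^\ell \frac{1}{\ell^2}\,c_1^2\|A\|_F^2}
   \;=\; \frac{c_1\|A\|_F}{\sqrt{\ell}},
\qquad
\beta \;=\; \max_{i}\,\frac{c_2\|A\|_2}{\ell}
      \;=\; \frac{c_2\|A\|_2}{\ell},
\end{equation*}
which are exactly the parameters claimed for $h_\ell(A)$.

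The main thing to be careful about is the bookkeeping of constants in moving between the tail–bound form produced by Hanson–Wright and the MGF definition of sub–exponentiality used in Claim \ref{lem-sub-exp-lin}; the two characterizations agree only up to absolute constants, which is why the statement allows unspecified $c_1,c_2$. Beyond that, the argument is essentially mechanical, and the combination $\nu \propto \|A\|_F/\sqrt{\ell}$, $\beta \propto \|A\|_2/\ell$ is what one would expect from averaging $\ell$ independent copies of a sub–exponential random variable whose original parameters scale with $\|A\|_F$ and $\|A\|_2$.
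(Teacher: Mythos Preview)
Your proposal is correct and follows the same overall strategy as the paper: establish that a single quadratic form $g^TAg$ is sub-exponential with parameters $(c_1\|A\|_F,\,c_2\|A\|_2)$ via Hanson--Wright, then average using Claim~\ref{lem-sub-exp-lin}. The only difference is in the first step: the paper decomposes $g^TAg$ into diagonal plus off-diagonal parts, observes that for Rademacher vectors the diagonal part is the constant $\tr(A)$, and then quotes the MGF bound on the off-diagonal sum that appears \emph{inside} the proof of Theorem~1.1 in \cite{rudelson2013hanson}, landing directly in the MGF form of Definition~\ref{def:sub-exp-mgf}. You instead take the tail-bound statement of Theorem~1.1 as a black box and convert it back to the MGF characterization via the standard equivalence. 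Both routes are valid; the paper's is slightly more direct (no tail-to-MGF conversion, hence cleaner constants), while yours is more modular and would work verbatim for any sub-Gaussian entries without the Rademacher-specific observation that the diagonal term is deterministic.
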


\begin{proof} Recall that $h_\ell(A)$ is an average of $\ell$ independent random variables, each of the form $g^TAg$, where $g \in \R^n$ is a vector with independent $\pm1$ Rademacher random entries. We start by decoupling $g^TAg$ in two sums involving diagonal and off-diagonal terms in $A$:
\begin{align*}
    g^TAg = \sum_{i=1}^n g_i^2A_{ii} + \sum_{i,j: i \neq j}^n A_{ij}g_i g_j.
\end{align*}
Here $g_i$ denotes the $i^\text{th}$ entry of $g$. Since each $g_i$ is sampled i.i.d. from a $\pm1$ Rademacher distribution, the first term is constant with value $\sum_{i=1}^n A_{ii} = \tr(A)$. \cite{rudelson2013hanson} derive a bound on the moment generating function for the off-diagonal term, which they denote $S = \sum_{i,j: i \neq j}^n A_{ij}g_i g_j$. Specifically, they show that
\begin{align*}
     \E\left[e^{\lambda S}\right] \leq e^{ c_1^2\|A\|_F^2\lambda^2/2}, \quad \text{for all} \hspace{5pt}  |\lambda| < \frac{1}{c_2\|A\|_2},
\end{align*}
where $c_1, c_2$ are positive constants. As $S$ is mean zero, we conclude that it is sub-exponential with parameters ($c_1\|A\|_F, c_2\|A\|_2$) (refer to Definition \ref{def:sub-exp-mgf}), and thus $g^TAg$ (which is just $S$ added to a constant) is sub-exponential with same parameters. Finally, from Claim \ref{lem-sub-exp-lin}, we immediately have that $h_\ell(A)$ is sub-exponential with parameters $\left(\frac{c_1\|A\|_F}{\sqrt{\ell}}, \frac{c_2\|A\|_2}{\ell}\right)$.
Note that, while we only consider $\pm 1$ Rademacher random vectors, a similar analysis can be performed for any i.i.d. sub-Gaussian random entries by showing that the diagonal term is itself subexponential (it will no longer be constant). The result will involve additional constants depending on the choice of $g_i$. In the case when $g_i$ are i.i.d. standard normals, the diagonal term is a scaled chi-squared random variable.
\end{proof}


Now, we are ready to move on to the main result.
\begin{reptheorem}{thm:summary}[Restated]
	For any $\epsilon,\delta, \alpha \in (0,1)$, Algorithm \ref{alg:deltashift} run with $\gamma = \alpha$, $\ell_0 = O\left(\frac{\log(1/\delta)}{\epsilon^2}\right)$, and $\ell = O\left(\frac{\alpha\log(1/\delta)}{\epsilon^2}\right)$ solves Problem 1. In total, it requires 
	\begin{align*}
		O\left(m\cdot\frac{\alpha\log(1/\delta)}{\epsilon^2} + \frac{\log(1/\delta)}{\epsilon^2}\right)
	\end{align*}
	matrix-vector multiplications with $A_1, \ldots, A_m$. 
\end{reptheorem}

\begin{proof} The proof is by induction. Let $t_1, \ldots, t_m$ be the estimators for $\tr(A_1), \ldots, \tr(A_m)$ returned by Algorithm \ref{alg:deltashift}. We claim that, for all $j = 1,\ldots,m$, $t_j$ is sub-exponential with parameters
\begin{align}
	\label{eq:sub_exp_induction}
	\nu_j &\leq \frac{\epsilon}{2\sqrt{\log(2/\delta)}}, & \beta_j &\leq\frac{\epsilon^2}{4\log(2/\delta)}.
\end{align}
If we can prove \eqref{eq:sub_exp_induction}, the theorem immediately follows by applying Claim \ref{clm:sub-exp-tail} with $t=\epsilon$ to the random variable $t_j$. Recall that $\E[t_j] = \tr(A_j)$

First consider the base case, $t_1 = h_{l_0}(A_1)$. By Claim \ref{lem-hansonwright}, $h_{l_0}(A_1)$ is sub-exponential with parameters $\left(\frac{c_1}{\sqrt{\ell_0}}, \frac{c_2\|A_1\|_2}{\ell_0}\right)$. Noting that $\|A_1\|_2 \leq \|A_1\|_F \leq 1$ and setting constants appropriately on $\ell_0$ gives the bound.

Next consider the inductive case. Recall that $t_j = (1-\gamma)t_{j-1} + h_\ell(\widehat{\Delta}_j)$, where $\widehat{\Delta}_j = A_{j} - (1-\gamma)A_{j-1}$. As shown in Section \ref{sec:our_approach}, $\|\widehat{\Delta}_j\|_F \leq 2\alpha$. So by Claim \ref{lem-hansonwright}, $h_\ell(\widehat{\Delta_1})$ is sub-exponential with parameters $\left( \frac{2c_1\alpha}{\sqrt{\ell}}, \frac{2c_2\alpha}{\ell} \right)$. As long as $\ell= c\cdot \frac{\alpha\log(2/\delta)}{\epsilon^2}$ for sufficiently large constant $c$, we therefore have by Claim \ref{lem-sub-exp-lin} that
\begin{align*}
	\beta_j = \max\left[(1-\gamma)\beta_{j-1},\frac{2c_2\alpha}{\ell}\right] \leq\frac{\epsilon^2}{4\log(2/\delta)}.
\end{align*}
Note that above we used the $\|\widehat{\Delta}_j\|_2 \leq \|\widehat{\Delta}_j\|_F \leq 2\alpha$. Setting $\gamma = \alpha$, we also have
\begin{align*}
	\nu_j^2 &= (1-\alpha)^2\nu_{j-1}^2 + \left(\frac{2c_1\alpha}{\sqrt{\ell}}\right)^2\\	
	&\leq (1-\alpha)\nu_{j-1}^2 + \alpha \nu_{j-1}^2 = \nu_{j-1}^2.
\end{align*}
The inequality $\left(\frac{2c_1\alpha}{\sqrt{\ell}}\right)^2 \leq \alpha \nu_{j-1}^2$ follows as long as $\ell = c\cdot \frac{\alpha\log(2/\delta)}{\epsilon^2}$ for sufficiently large constant $c$. We have thus proven \eqref{eq:sub_exp_induction} and the theorem follows.
\end{proof}

\section{$\DSPP$ Analysis}
\label{app:deltashift++}

In this section, we prove Theorem \ref{thm:summary++}. Before doing so, we include pseudocode for the $\DSPP$ algorithm. We let $h^{++}_\ell(A)$ denote the output of the Hutch++ algorithm from \cite{MeyerMuscoMusco:2021} run with $\ell$ matrix-vector multiplications -- we refer the reader to that paper for details of the method.

\begin{algorithm}[htb]
	\caption{\DSPP}
	\label{alg:deltashiftpp}
	{\bfseries Input}: Implicit matrix-vector multiplication access to $A_1, ..., A_m \in \mathbb{R}^{n \times n}$, positive integers $\ell_0, \ell$, damping factor $\gamma \in [0,1]$.  \\
	{\bfseries Output}: $t_1, \ldots, t_m$ approximating $\tr(A_1), \ldots, \tr(A_m)$.
	
	\begin{algorithmic}
		\STATE Draw $\ell_0$ random $\pm 1$ vectors $g_1, \ldots, g_{\ell_0} \in \R^n$\\
		\STATE Initialize $t_1 \gets h^{++}_{\ell_0}(A_1)$ \\
		\FOR{$j\leftarrow 2$ {\bfseries to} $m$}
		\STATE $t_j \gets \gamma \cdot h^{++}_{\ell}(A_{j}) + (1-\gamma)\left(t_{j-1} + h^{++}_{\ell}(A_{j}-A_{j-1})\right)$
		\ENDFOR
	\end{algorithmic}
\end{algorithm}

\begin{reptheorem}{thm:summary++}[Restated]
		For any $\epsilon, \delta, \alpha \in (0,1)$, $\DSPP$ (Algorithm \ref{alg:deltashiftpp}) run with $\ell_0 = O(\sqrt{1/\delta}/\epsilon)$, $\ell = O(\sqrt{\alpha/\delta}/\epsilon)$, and $\gamma = \alpha$, solves Problem \ref{prob:dyn-trace2} with  
	\begin{align*}
		O\left(m\cdot\frac{\sqrt{\alpha/\delta}}{\epsilon} + \frac{\sqrt{1/\delta}}{\epsilon}\right)
	\end{align*}
total matrix-vector multiplications involving $A_1, \ldots, A_m$. 
\end{reptheorem}

\begin{proof}
$\DSPP$ is based on a slightly different formulation of the recurrence in \eqref{eq:trace_recur} that was used to design $\DS$. In particular, rearranging terms, we see that Equation \eqref{eq:trace_recur} is equivalent to:
\begin{align}
	\tr(A_j) =(1-\gamma)\left(\tr(A_{j-1}) + \tr({\Delta_j})\right) + \gamma \tr(A_j), \hspace{5pt} \text{where} \hspace{5pt}
	{\Delta_j} = A_{j} - A_{j-1}. 
\end{align}
Following this equation, $\DSPP$ approximates $\tr(A_{j})$ via:
\begin{align}
    \label{eq:dspsd}
    t_{j} = \gamma h^{++}_\ell(A_{j}) + (1-\gamma) \left(t_{j-1} +  h^{++}_\ell (A_{j} - A_{j-1})  \right).
\end{align}
As in the analysis of $\DS$, we bound the variance of $t_j$ recursively, showing that it is less than $\delta \epsilon^2$. We start with the base case. From Fact \ref{fact:hutchpp_var} and our assumption in Problem \ref{prob:dyn-trace2} that $\|A_1\|_* \leq 1$, we have the $\Var[t_1] \leq \epsilon^2\delta$ as long as $\ell_0 = \frac{4}{\sqrt{\delta}\epsilon}$. Then the recursive case:
\begin{align*}
		\Var[t_j] &= \gamma^2\Var[h^{++}_{\ell}(A_j)] + (1-\gamma)^2 \Var[h^{++}_{\ell}({\Delta_j})] + (1-\gamma)^2 \Var[t_{j-1}]\\
		&\leq \frac{16 \gamma^2 \|A_j\|_*^2}{\ell^2}  +\frac{16(1-\gamma)^2 \alpha^2}{\ell^2}  + (1-\gamma)^2 \delta\epsilon^2 \\
		&\leq \frac{16 \alpha^2}{\ell^2}  +\frac{16\alpha^2}{\ell^2}  + (1-\alpha) \delta\epsilon^2\\
		& \leq \frac{\alpha}{2}\delta\epsilon^2 + \frac{\alpha}{2}\delta\epsilon^2 + (1-\alpha) \delta\epsilon^2 = \delta\epsilon^2,
\end{align*}
where the last inequality holds as long as $\ell = \frac{4\sqrt{2\alpha/\delta}}{\epsilon}$. Given a bound on the variance of $t_1, \ldots, t_m$, we then just apply Chebyshev's inequality to obtain the required guarantee for Problem \ref{prob:dyn-trace2}.
\end{proof}

\textbf{Choosing $\gamma$ in practice.} As in Section \ref{sec:opt_gamma}, we would like to choose $\gamma$ automatically without the knowledge of $\alpha$. We can do so in a similar way as before by minimizing an approximation to the variance of $t_j$ over all possible choices of $\gamma$. This is a bit trickier than it was for $\DS$ because the stated variance of Hutch++ in Fact \ref{fact:hutchpp_var} depends on the nuclear norm of the matrix being estimated, which is not easy to approximate using stochastic estimators. However, it turns out that this variance is simply an upper bound provided by the analysis in \cite{MeyerMuscoMusco:2021}: the precise variance bound depends on $\|A - PA\|_F^2$ where $P$ is a low-rank projection matrix obtained when running Hutch++.
This quantity can be computed via stochastic trace estimation, and by doing so we obtain an expression for a near optimal choice of $\gamma$, exactly as in Section \ref{sec:opt_gamma}. This near optimal $\gamma$ is what is used in our experimental evaluation. Note that obtaining this $\gamma$ is what necessitated the reformulated recurrence of \eqref{eq:dspsd}, as we only require the variance of Hutch++ run on two fixed matrices at each iteration: $A_j$ and $\Delta_j = A_j - A_{j-1}$ $\DS$ on the other hand involved a matrix $\widehat{\Delta}_j = A_j - (1-\gamma)A_{j-1}$ that depended on $\gamma$. It would not be possible to easily obtain a direct equation for the variance of Hutch++ applied to this matrix as the matrix $P$ computed by Hutch++ would change dependeing on $\gamma$.

Similar to $\DS$, we choose $\gamma_i$ at each step $i$ that minimizes the variance of estimate at that particular step. Specifically, letting $K_A = \|A - A_k\|_F^2$ where $A_k$ is a rank-$k$ approximation to matrix $A$, and $v_i$ be the variance of estimate at time step $i$, we obtain

\begin{equation}
    	\gamma_i^* = \min_{\gamma}\left[  \frac{\gamma^2 8 K_{A_i}}{\ell} + (1-\gamma)^2 (v_{i-1} + \frac{8 K_{\Delta_i}}{\ell}) \right]
    	 = \frac{8 K_{\Delta_i} + \ell v_{i-1}}{8 K_{A_i} + \ell v_{i-1} + 8 K_{\Delta_i}}
\end{equation}

Note that similar to $\DS$, we can reuse the matrix-vector products to calculate near-optimal $\gamma_i$ at each step. 



\section{Experimental details}
\label{app:experiments}

\paragraph{Allocation of matrix-vector products for Restart and $\NR$ methods:} The rationale behind the Restart method is using higher number of matrix-vector products for the first matrix in the sequence, letting us use less for subsequent matrices, followed by restarts at set intervals. Note that this still lets us take advantage of relatively small perturbations to the matrices. Following this motivation, for a sequence of 100 matrices we restart every $q=20$ time steps. The $Q$ matrix-vector multiplications (total matrix-vector products) were evenly distributed to each block of 20 matrices, and then $1/3$ of those used for estimating the trace of the first matrix in the block, and the rest split evenly among the remaining 19. For NoRestart, the same number of vectors were allocated to $A_1$ as for Restart, and the rest evenly divided among all $99$ remaining steps, which results in better accuracy for $A_1$ compared to Hutchinson's and $\DS$.

\paragraph{Synthetic data:} For synthetic data experiments, we consider a random symmetric matrix $A \in \R^{n \times n}$ and random perturbation $\Delta \in \R^{n \times n}$ to $A$ for 100 time steps, with $n=2000$. We consider two cases, one where the perturbations are small (Fig. \ref{fig:restart-low}) and one where the perturbations are significant (Fig. \ref{fig:restart-high}). For both cases, $A_1$ (first matrix in the sequence) is a symmetric matrix with uniformly random eigenvectors and eigenvalues in $[-1,1]$. For small perturbations, each perturbation is a random rank-1 matrix: $\Delta_j = 5e^{-5}\cdot r \cdot gg^T$ where $r$ is random $\pm 1$ and $g\in \R^n$ is random Gaussian. For the large perturbation case, each $\Delta_j$ is a random rank-25 positive semidefinite matrix. As such, $A$'s trace and Frobenius norm monotonically increase over time, which is reflected in increasing absolute error among all algorithms. 

\paragraph{Estimating natural connectivity:} Application of dynamic trace estimation to the problem discussed in \cite{WangSunMusco:2020} involves estimating the natural connectivity of a dynamic graph (which is $\tr(\exp{(B)})$ for an adjacency matrix $B$). We use Lanczos with 15 iterations to approximate the matrix-vector product $\exp{(B)} \cdot g$ and start with an accurate estimate for the first matrix in the sequence (using 5000 matrix-vector products with Hutchinson's). For estimating the trace of $\Delta_i$ matrices, we use $\ell = 50$ for $\DS$ and $\DSPP$ across 100 time steps. Note that for estimating trace of matrix $A$, Hutch++ allocates the number of matrix-vector products as $\ell/3$ for three separate purposes (refer \cite{MeyerMuscoMusco:2021}). For estimating trace of $\Delta_i$, we can divide these matrix-vector products as $\ell/5$ instead of $\ell/6$ (for two matrices $A_{i+1}$ and $A_i$) as we can reuse one set of matrix-vector products.

\paragraph{Hessian spectral density:} Approximating the spectral density of Hessian requires computing the trace of polynomials of the Hessian. We consider the Chebyshev polynomials. The three term recurrence relation for the Chebyshev polynomials of first kind is:
 \begin{align}
 	\label{eq:cheby_recur}
     T_0(H) &= I & T_1(H) &= H &  T_{n+1}(H) &= 2HT_n(H) - T_{n-1}(H)
 \end{align}
 
Here $I$ is the identity matrix. As Chebyshev polynomials form orthogonal basis for functions in range $[-1,1]$, as a first step we estimate the maximum eigenvalue of the Hessian using power iteration and scale $\tilde{H} = H/\lambda_{\max}$. Like trace estimation, power iteration requires computing Hessian-vector products, which we compute approximately using the PyHessian library \cite{YaoGholamiKeutzer:2020}.\footnote{Available under an MIT license.} For a given neural network and loss function, PyHessian efficiently approximates Hessian-vector products by applying Pearlmutter's method to a randomly sampled batch of data points \cite{Pearlmutter:1994}.
To compute matrix-vector products with $T_0(\tilde{H}), T_1(\tilde{H}),\ldots, T_q(\tilde{H})$, which are needed to approximate the trace of these matrices, we simply implement the recurrence of \eqref{eq:cheby_recur} using PyHessian's routine for Hessian-vector products. Multiplying by $T_q(\tilde{H})$ requires $q$ Hessian-vector products in total. As computing ground truth values is impossible in this setting, we use Hutchinson's with 500 matrix-vector products as the ground truth values.

\textbf{Experimental setup:} All experiments were run on server with 2vCPU @2.2GHz and 27 GB main memory and P100 GPU with 16GB memory.




\end{document}